\newenvironment{mechanism}[1][htb]
  {
   \begin{algorithm}[#1]%
  }{\end{algorithm}}
\newtheorem{theorem}{Theorem}[section]
\newtheorem{lemma}[theorem]{Lemma}
\newtheorem{prop}[theorem]{Proposition}
\newtheorem{definition}{Definition}
\newtheorem{example}{Example}[section]
\newcommand{\E}[0]{{\mathbb{E} }}
\newcommand{\ProbRev}{\ensuremath{P_\text{rev}}\xspace}
\newcommand{\ProbAcc}{\ensuremath{P_\text{acc}}\xspace}
\begin{document}

\title{Eliciting Honest Information From Authors Using Sequential Review}

\author{
  Yichi Zhang\thanks{School of Information, University of Michigan, \texttt{yichiz@umich.edu}}\and Grant Schoenebeck\thanks{School of Information, University of Michigan, \texttt{schoeneb@umich.edu}}\and Weijie Su\thanks{Department of Computer and Information Science, University of Pennsylvania, \texttt{suw@wharton.upenn.edu}}
}
\date{}
\maketitle

\begin{abstract}
    In the setting of conference peer review, the conference aims to accept high-quality papers and reject low-quality papers based on noisy review scores. A recent work proposes the isotonic mechanism, which can elicit the ranking of paper qualities from an author with multiple submissions to help improve the conference's decisions. However, the isotonic mechanism relies on the assumption that the author's utility is both an increasing and a convex function with respect to the review score, which is often violated in peer review settings (e.g.~when authors aim to maximize the number of accepted papers). In this paper, we propose a sequential review mechanism that can truthfully elicit the ranking information from authors while only assuming the agent's utility is increasing with respect to the true quality of her accepted papers.  
    The key idea is to review the papers of an author in a sequence based on the provided ranking and conditioning the review of the next paper on the review scores of the previous papers.
    Advantages of the sequential review mechanism include 1) eliciting truthful ranking information in a more realistic setting than prior work; 2) improving the quality of accepted papers, reducing the reviewing workload and increasing the average quality of papers being reviewed; 3) incentivizing authors to write fewer papers of higher quality.
\end{abstract}

\section{Introduction}


Peer review, the process of evaluating scientific research by volunteered experts, undergirds the success of a conference by ensuring the accepted papers are of high quality.  However, the reliability of peer review (especially for large computer science conferences) has raised significant concerns. In a NeurIPS experiment conducted in 2014 \cite{lawrence2014nips}, it was shown that within the set of papers recommended for acceptance by two independent committees, the disagreement rate was as high as 50\%. This result was further confirmed in the repeated experiment conducted in 2021 \cite{cortes2021inconsistency}. Even worse, the rapid growth of the reviewing workload and the shortage of qualified reviewers, have posed unprecedented challenges to our review system \cite{sculley2018avoiding, Shah2019PrincipledMT}.  

This leads to the dilemma of conference peer review: the conference's objective of accepting only high-quality papers (from a large set of submissions) clashes with the shortage of reliable peer reviews upon which the conference must base its decisions. To mitigate this issue, we introduce a novel review mechanism called the \emph{sequential review mechanism} that can 1) solicit high-quality information from authors to assist the acceptance/rejection decisions, 2) reduce the reviewing workload and 3) incentivize authors to write high-quality papers. 

The main challenge is how to elicit useful information from the authors who have conflicting interests with the conference. For instance, authors may desire to have more publications, leading them to seek acceptance for more papers, regardless of the papers' quality. In this case, although authors possess the best signals of their own papers' quality compared with any reviewer, they may prefer not to disclose this information truthfully to the conference. For example, while being asked to report the true quality of their papers, authors may be inclined to inflate scores in order to increase the chance of acceptance. 

Fortunately, positive results exist. \citet{su2021you} shows that it is possible to elicit truthful rankings of paper quality from an author with multiple submissions. The main idea of the proposed \emph{isotonic mechanism} is to shift the noisy review scores by running an isotonic regression based on the author's reported ranking. It is shown that reporting the ranking of papers truthfully is the best response for an author.  Unfortunately, this result only holds when the agent's utility for each paper is an \textbf{increasing and convex} function of the review score and additive across papers. 
However, the assumption of convex utility is strong and likely violated in the setting of conference peer review. For example, this assumption is violated when an author aims to maximize the number of her accepted papers.  Moreover, in this case, the isotonic mechanism can be gamed in a rather straightforward manner.  Suppose such an author has several borderline papers and one outstanding paper to submit. Under the isotonic mechanism, her best response is nonetheless to rank the outstanding paper at the bottom such that the review scores of all borderline papers will be shifted up after the isotonic regression, which will almost certainly lead to the acceptance of all papers.\footnote{For an introduction of the isotonic mechanism and a careful illustration of this example, please refer \cref{app:isotonic}.}

In this work, we build on the idea of eliciting the author's ranking information from the previous work and primarily focus on addressing the incentive issue discussed above.
We propose the sequential review mechanism in a natural conference review model. Our method can elicit the true ranking information as long as the author's utility is additive in terms of the rewards of all \textbf{accepted} papers, where the reward is a \textbf{non-decreasing} function of the accepted papers' qualities.  
The sequential review mechanism works by reviewing an author's submissions in sequence. In particular, papers with higher reported rankings are reviewed with priority, while papers with lower reported rankings will be conditionally reviewed depending on the review scores of the higher-ranked papers from the same author. If the review process terminates, e.g.~due to a notably low review score of a paper, any remaining unreviewed papers will be rejected without further assessment. Intuitively, under the sequential review mechanism, any misreporting of the true ranking of the papers will result in an earlier termination of the review process, which penalizes dishonest behaviors.\footnote{One may be concerned that the sequential review mechanism will result in a significant delay in the review process. However, note that the mechanism works exactly the same if all papers are simultaneously reviewed or reviewed in batches, as long as the acceptance/rejection decisions are made in sequence. See more in \cref{sec:discussion}.}

\subsection{Contributions and Results}

Our main contribution is a framework for designing theoretically robust mechanisms, with the potential to improve conference peer review in practice. The proposed sequential review mechanism not only addresses a key incentive issue that plagues prior work but also exhibits many additional appealing properties. 

\paragraph{Truthful sequential review mechanisms.}
Under the sequential review mechanism framework, we first identify a sufficient constraint that ensures a sequential review mechanism to be truthful (i.e.~reporting the true ranking of the paper quality is the best response for any author). While not necessary, this constraint provides a large space of truthful sequential review mechanisms. To show the effectiveness of our framework, we introduce two practical mechanisms as examples: the \emph{memoryless coin-flip mechanism} that reviews the $i+1$th ranked paper with a probability determined by the review score of the $i$th ranked paper; and the \emph{credit pool mechanism}, which counts the cumulative review scores (positive or negative) of the reviewed papers and terminates the review process when the ``credit pool'' is empty.

\paragraph{Conference utility and review burden.}
The sequential review mechanism utilizes the authors' information to prioritize the review of high-quality papers. Therefore, it can improve the conference's utility (as low-quality papers are less likely to be accepted), while reducing the review burden by reallocating more review resources to papers deemed likely to be of higher quality.
To evaluate the performance of the sequential review mechanism, we use the \emph{parallel review mechanism} as the baseline which unconditionally reviews all papers. We further use the isotonic mechanism with oracle access to the true ranking information as an unachievable upper bound.\footnote{The isotonic mechanism is not truthful in our setting but we nonetheless provide oracle access to the true ranking to it.}

Our simulation results suggest that compared with the baseline, the sequential review mechanism can improve the conference utility towards the upper bound by over $40\%$ when the author submits more than three papers. This effect is even more significant when 1) each author has more papers, 2) papers are more likely to be of low quality and 3) reviewers are more noisy. 
Moreover, we empirically investigate the number of reviews that a sequential review mechanism can save while achieving the same conference utility as the parallel review mechanism. We employ the ICLR OpenReview datasets spanning recent years and develop a more realistic review model. Our results indicate that about 20\% of the review burden can be saved when utilizing the sequential review mechanism. Furthermore, this number will increase over time if the trend of a  growing number of submissions per author continues.

\paragraph{Endogenous paper quality.}
In the setting where authors can choose the effort they exert on each of their papers, we show that compared with the parallel review mechanism, the sequential review mechanism always provides a stronger incentive for writing (fewer) papers of higher quality instead of (more) papers of low-quality. This is because the sequential review mechanism decreases the marginal return of producing lower-quality papers by penalizing bottom-ranked papers with lower probabilities of being reviewed. We view this property particularly valuable, especially in light of the prevailing trend where authors submit an increasingly large number of papers to conferences, sometimes disregarding their inherent quality. 

\section{Related Works}

Other than the discussions on the isotonic mechanism \cite{su2021you, wu2023isotonic, yan2023isotonic}, several attempts exist with the goal of improving the peer review system with a focus on dealing with strategic interactions between conferences and authors. In a setting where authors can strategically decide the venues to submit their papers, \citet{10.1145/3490486.3538235} model the paper (re)submission process as a Stackelberg game between the author and the conference. They focus on how to design the review mechanism to achieve the Pareto optimal tradeoff between the conference quality and the review burden. \citet{srinivasan2021auctions} propose the idea of using the VCG mechanism to elicit bids from authors and using peer prediction mechanisms to evaluate reviews and reward the reviewers (with virtual money). Thus, agents are motivated to provide high-quality reviews so as to raise enough funding to bid for review slots. In dealing with the malicious bidding problem, a stream of literature focuses on designing and optimizing the paper-reviewer assignment mechanism \cite{aziz2019strategyproof, jecmen2020mitigating, dhull2022strategyproofing, xu2018strategyproof}. The goal here is to achieve strategyproofness such that the outcomes of any reviewer's own submissions are (approximately) independent of the reviews they provide for other submissions.

Our work is also related to the impartial peer selection problem, where self-interested agents assess one another in such a way that none of them has an incentive to misrepresent their evaluation. A notable application of this problem is the US National Science Foundation (NSF) experiment, each PI was asked to rank 7 proposals from other PIs \citep{naghizadeh2013incentives}. To incentivize careful reviews, the chance for a PI to be funded is tied to the quality of their reviews. The primary goal of the literature on the peer selection problem is to improve the accuracy of assessments while guaranteeing strategyproofness \cite{naghizadeh2013incentives, aziz2019strategyproof, de2008impartial, 6920096}. However, these investigations differ from our problem in that our focus is on eliciting evaluations of multiple items held by a single agent directly from the agent itself, rather than relying on evaluations from other agents.

Additionally, there exists a considerable body of literature that aims to improve peer review from the reviewer's perspective.
This includes investigations into single versus double blind reviewing \cite{blank1991effects, snodgrass2006single, bazi2020peer}, assignment versus bidding \cite{asi.22747, meir2021market}, review scale and miscalibration \cite{radiology.178.3.1994394, wang20182, spalvieri2014weighting}, reviewer bias \cite{lee2013bias, haffar2019peer, lane2022conservatism}, and dishonest behaviors \cite{cohen2016organised, fanelli2009many, littman2021collusion}.
A recent survey by \citet{shah2022challenges} provides additional contexts and perspectives on the problems of peer review.

\section{Model}
\label{sec:model}

We view the peer review mechanism as an individual contract. That is, each paper is reviewed and evaluated independently based on its review scores.\footnote{An example of a review policy that is not an individual contract is when the conference maintains a fixed acceptance rate, which results in a contest setting where authors have to compete with each other for acceptance.  However, in such a case, as the number of authors and papers becomes large, the importance of interactions between individual authors will decrease and therefore limit our model.} Therefore, while reasoning about an agent's best response (\cref{sec:seq_mechanism} and \ref{sec:endogeneous}), it is sufficient to assume that there is only one agent with $n$ papers. We eventually investigate the optimization of a review mechanism, where we assume authors are drawn from a distribution (see \cref{subsec:real_data}).

Throughout the paper, we will use $[n]$ to denote the set $\{1,2,\ldots,n\}$. 
Suppose an author has $n$ submissions indexed by $i\in [n]$, each with a quality of $q_i\in \mathbb{R}$.  
The author observes a noisy signal $s_i = q_i + \xi_i$ for each of her papers where $\xi_i$ are i.i.d.~sampled from some distribution. This information is private to the author.\footnote{Our main results in \cref{sec:seq_mechanism} only assume the author has a noisy signal about his papers' qualities. However, in  \cref{sec:opt_mechanism} and \cref{sec:endogeneous}, we restrict our attention to the setting where authors observe perfect information, i.e.~$s_i = q_i$.}

For each paper $i$, the conference decides whether to accept or reject based on its review score $r_i=q_i + \epsilon_i$, where $\epsilon_i$ are i.i.d.~sampled from some distribution. 
In the theory sections, we will further simplify by assuming each paper is assigned with one reviewer; however,  we show how to apply our method to the multi-reviewer setting in \cref{subsec:real_data}.
The conference commits to an acceptance policy such that a paper with review score $r$ (if it is reviewed) is accepted with probability $\ProbAcc(r)$. For example, for a threshold acceptance policy, $\ProbAcc(r) = 1$ if $r\ge \tau_{acc}$ and 0 otherwise. We assume that the utility of the conference is the sum of the accepted papers' quality, i.e.~$U_c(\mathcal{M}) = \sum_{i\in [n]} q_i\cdot \mathbbm{1}[\text{paper $i$ is accepted under mechanism $\mathcal{M}$}]$. That is, the quality of a paper can be viewed as its desirability for acceptance by the conference.

In addition to soliciting review scores, the conference can solicit a ranking of paper qualities from the author. The author first ranks her papers based on her review scores $s_i$ such that, without loss of generality, $s_1 \ge s_2 \ge \cdots \ge s_n$. We name each paper by the author's true ranking, i.e.~paper $i$ is the paper that the author's signal is $s_i$.  
The author reports a permutation $\pi$ on the indexes $1, 2, \ldots, n$ of her papers, where $\pi(i)$ is the rank of paper $i$ after the permutation. The truthful report is the original ranking, i.e.~$\pi^*(i)=i$. 
We primarily focus on the incentive issue in the single-author setting, i.e.~each paper has only one author. However, we note an easy solution in the multi-author setting: assign each paper to one of its authors and solicit the ranking information only from the selected author (see \cref{sec:discussion} for details).

We assume that the author's utility is the sum of the rewards of her accepted papers: each paper's reward is zero if rejected and $u_a(q_i)$ if accepted, where $u_a$ is a non-negative and non-decreasing reward function for the true quality of the paper. For example, if $u_a(q)=1$ for any $q$, the author's goal is to maximize the expected number of accepted papers. Note that the reward is a function of the true quality rather than the author's signal because the value of an accepted paper (e.g.~its citations) largely depends on how useful others find it.
The author strategically reports a ranking $\pi$ to maximize its expected utility. 
We use $U_a(\pi)$ to denote the expected author utility under the permutation $\pi$, where the randomness is with respect to the review noise and the mechanism. 

The main question studied in this paper (\cref{sec:seq_mechanism}) is how to design a \emph{truthful} review mechanism such that reporting $\pi^*$ is the author's best response, i.e.~$U_a(\pi^*)\ge U_a(\pi)$ for any $\pi$. Then, in \cref{sec:opt_mechanism}, we study how to (empirically) optimize the conference utility conditioned on truthfulness.
Finally, in \cref{sec:endogeneous}, we consider a variant of the general model by additionally considering that the author can choose the quality of the papers that she writes. 

\section{Truthful Sequential Review Mechanisms}
\label{sec:seq_mechanism}

This section presents a framework for designing truthful review mechanisms. In particular, we introduce the sequential review mechanism framework and show a sufficient condition for a sequential review mechanism to be truthful. We further provide two concrete and practical truthful sequential review mechanisms under this framework as examples.

\subsection{The Sequential Review Mechanism Framework}

We first introduce the \emph{naive sequential review mechanism} as an illustrative example of the more general sequential review mechanism framework.  

\begin{definition}
    Given an author with $n$ papers and a ranking of these papers, the naive sequential review mechanism reviews one paper at a time based on the order of the reported ranking. The first paper is always reviewed.  However, for $i$ ranging from $2$ to $n$, the paper ranked in the $i$th place is reviewed if and only if the paper ranked in $i-1$st place is accepted.
\end{definition}

Intuitively, the naive sequential review mechanism incentivizes truth-telling because any manipulation of the true ranking will more likely result in an early stop of the review process which harms the utility of the author. However, the naive sequential review mechanism can be too stringent in reality, especially when authors are likely to produce good papers. To address this, we generalize this idea and present the sequential review mechanism framework. This framework offers a diverse range of mechanisms that can be fine-tuned to optimize performance in various circumstances.

At a high level, the idea is to condition the review of lower-ranked papers on the acceptance (and thus the review score) of the higher-ranked papers. If the mechanism decides not to review the paper in round $i$, any paper in round $j>i$ will be rejected without review. We then say that the mechanism terminates in round $i$. Now, we formally introduce the sequential review mechanism framework.

\begin{definition}\label{def:framework}
    A sequential review mechanism $\mathcal{M}_s=(\ProbAcc, \ProbRev, \bm{\mu})$ has three components:
\begin{itemize}
    \item An \emph{acceptance policy} $\ProbAcc$ that maps from a review score to a probability of accepting the corresponding paper.
    \item A \emph{review policy} $\ProbRev$ that maps from a review state in round $i$ to a probability of reviewing the paper in round $i$, for $i\in [n]$.
    \item A \emph{state transition mapping} $\mu_i$ that maps from a review state and a review score of the paper in round $i$ to a distribution of states in round $i+1$, for $i\in [n-1]$.
\end{itemize}
\end{definition}

\begin{figure*}[tb]
     \centering
     \begin{subfigure}[b]{1\textwidth}
         \centering
         \includegraphics[width=\textwidth]{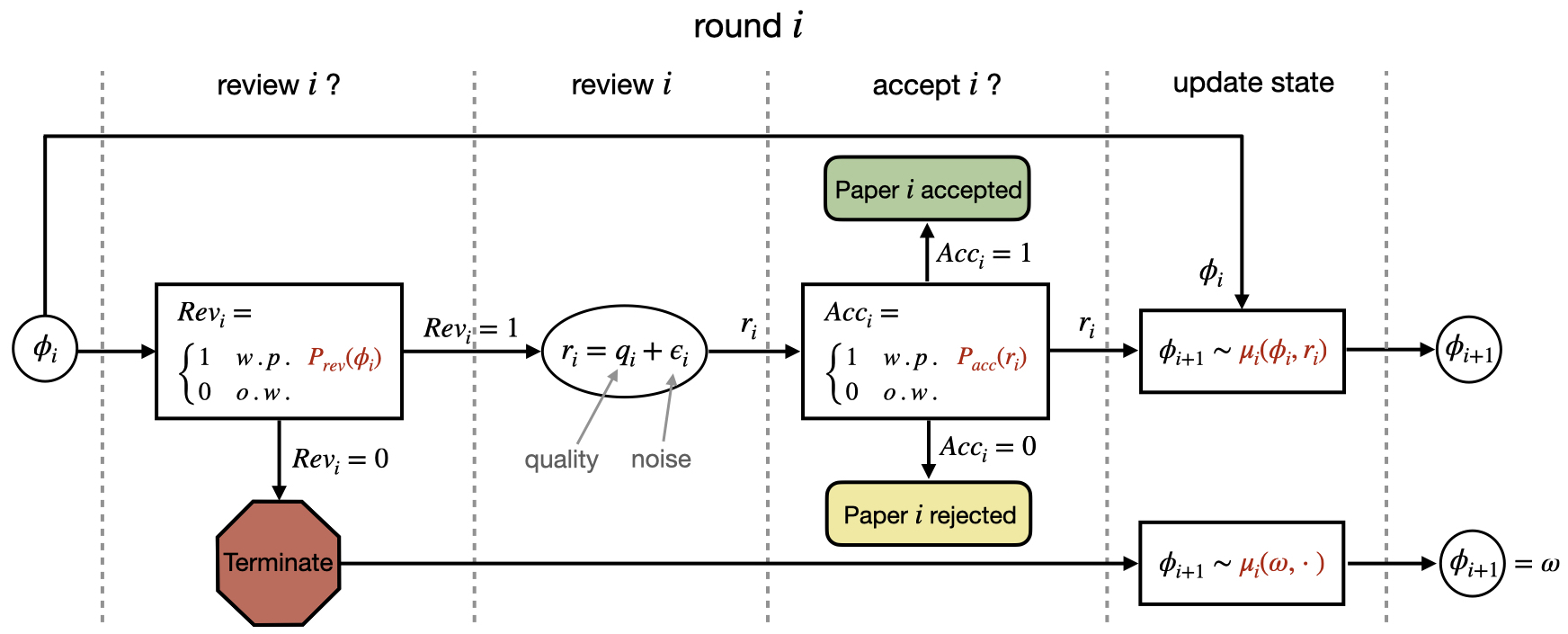}
     \end{subfigure}
     \hfill
     \caption{The sequential review mechanism in round $i$.}\label{fig:round_i}
\end{figure*}

\vspace{-2mm}
The sequential review mechanism in round $i$ is illustrated in \cref{fig:round_i}. 
In the above definition, a \emph{review state} in round $i$, denoted as $\phi_i\in \Phi_i$, is sufficient to determine the probability that the paper in round $i$ will be reviewed, where $\Phi_i$ is the space of review states in round $i$. The review states are weakly ordered such that the author always prefers to be in a higher-ordered state. That is, for each pair of states in round $i$, one of which must have a (weakly) higher order than the other, denoted as $\phi'_i \succeq \phi_i$ for every $\phi'_i, \phi_i \in \Phi_i$ and every $i\in [n]$. We use $\phi'_i\succ \phi_i$ to represent the strict ordering. Furthermore, if the author is indifferent between two states, we say $\phi'_i \sim \phi_i$. 
When comparing two vectors of states, we use the same notations to indicate term-wise preference. For example, if $\bm{\phi}, \bm{\phi}'\in\Phi^m$, $\bm{\phi}' \succeq \bm{\phi}$ implies that $\phi'_j \succeq \phi_j$ for any $1\le j\le m$.

Specially, we use $\omega$ to denote the termination state such that $\ProbRev(\omega) = 0$ and $\mu(\omega, \cdot) = \omega$ with probability of $1$. Note that for any round $i$ and any review state $\phi\in \Phi_i\backslash \{\omega\}$, we have $\phi \succ \omega$.

Taking the naive sequential review mechanism as an example, the review state is $\phi_i = \psi$ if the paper in round $i$ is accepted and $\omega$ otherwise. In round $1$, $\phi_1 = \psi$. Then, the review policy is $\ProbRev^{std}(\psi)=1$ and $\ProbRev^{std}(\omega)=0$ for any round. The state transition mapping of the naive mechanism is that $\mu^{std}(\psi,r) = \psi$ with probability $\ProbAcc(r)$, $\mu^{std}(\psi,r) =\omega$ with probability $1-\ProbAcc(r)$, and $\mu^{std}(\omega,\cdot) = \omega$ with probability $1$.

Furthermore, note that by our definition, the acceptance policy is assumed to be memoryless, where the acceptance of a paper only depends on its own review score. In other words, conditioned on a paper being reviewed, the same acceptance policy is applied. However, the review policy can have memory such that review scores from previous rounds may affect the distribution of the review state in the current round through a Markov chain.


\subsection{A Sufficient Condition For Truthfulness}
\label{sec:sufficient_cond}

Now, we investigate what properties of the acceptance policy, the review policy, and the state transition mapping are sufficient for a sequential review mechanism to be truthful. First of all, we need both policies to be monotone which reward higher review scores and punishes lower review scores.

\begin{definition}
    We say an acceptance policy is monotone if $\ProbAcc$ is (weakly) increasing, i.e.~$\ProbAcc(r')\ge\ProbAcc(r)$ for any $r'\ge r$.
\end{definition}



\begin{definition}
    We say a review policy is monotone if $\ProbRev(\phi'_i)\ge \ProbRev(\phi_i)$ for any $\phi'_i, \phi_i\in \Phi_i$ such that $\phi'_i\succeq\phi_i$ in any round $i$.
\end{definition}

A monotone acceptance policy rewards a paper with a higher review score by accepting it with a higher probability; a monotone review policy rewards a higher-ordered review state with an increased probability of reviewing the paper in the next round. However, the requirements for the state transition mapping are a little more complicated. In particular, we utilize the concept of stochastic dominance. 

\begin{definition}
Let $\bm{X}$ and $\bm{Y}$ be two $m$-dimension random vectors of review states $\bm{X}, \bm{Y}\in\Phi^m$ for some review state space $\Phi$. We say $\bm{X}$ first-order stochastic dominates $\bm{Y}$ if $\Pr(\bm{X}\succeq \bm{\phi})\ge \Pr(\bm{Y}\succeq \bm{\phi})$ for any $\bm{\phi}\in\Phi^m$.    
\end{definition}

For simplicity, let $\tilde{\mu}(r_1, r_2|\phi_i) = \mu_{i+1}(\mu_{i}(\phi_i, r_1), r_2)$ be the state distribution in round $i+1$ conditioned on having review state $\phi_i$ in round $i-1$, and having review scores $r_1$ and $r_2$ in round $i$ and round $i+1$ respectively.

\begin{definition}\label{def:mu_monotone}
    We say the state transition mapping $\bm{\mu}$ is \emph{monotone} if for any review round $i\in [n-1]$,
    \begin{enumerate}
        \item it is monotone in score: For any state $\phi_i\in\Phi_i$, $\mu_i(\phi_i, r')$ first-order stochastic dominates $\mu_i(\phi_i, r)$ for any $r'\ge r$;
        \item it is monotone in state: For any review score $r$, $\mu_i(\phi', r)$ first-order stochastic dominates $\mu_i(\phi, r)$ for any state $\phi' \succeq \phi$;
        \item it is monotone in ordering: For any state $\phi_i\in\Phi_i$ and review scores $r'\ge r$, $\tilde{\mu}(r', r|\phi_i)$ first-order stochastic dominates $\tilde{\mu}(r, r'|\phi_i)$. Furthermore, for any $r_1\le r_2\le r_3\le r_4$ such that $r_1+r_4 = r_2+r_3$, let $X\sim \tilde{\mu}(r_4, r_1|\phi_i)$, $Y\sim \tilde{\mu}(r_2, r_3|\phi_i)$, $X'\sim\tilde{\mu}(r_1, r_4|\phi_i)$ and $Y'\sim\tilde{\mu}(r_3, r_2|\phi_i)$. Let $\bar{Z} = \max(X, Y)$, $\underbar{Z} = \min(X, Y)$, $\bar{Z}' = \max(X', Y')$, $\underbar{Z}' = \min(X', Y')$.\footnote{Here, the $\max$ and $\min$ function select the higher-ordered state and the lower-ordered state respectively.} Then, $(\bar{Z}, \underbar{Z})$ first-order stochastic dominates $(\bar{Z}', \underbar{Z}')$.
    \end{enumerate} 
\end{definition}



The monotonicities of $\bm{\mu}$ in score and state are intuitive which suggest that the state transition mapping results in a better state when the review score is higher and the review state is higher-ordered, respectively. The monotonicity in ordering deals with the cases where the review scores in two rounds are swapped. First, it requires the review state distribution to be better if the higher review score is put earlier. Furthermore, when there are four ordered review scores and a pair of review states, the monotonicity in ordering suggests that putting the largest score $r_4$ earlier in round $i$ with the lowest score $r_1$ in round $i+1$ and putting $r_2$ in round  $i$ with $r_3$ in round $i+1$ leads to better state distributions than swapping the review scores in round $i$ and $i+1$ in this ordering. 
We are ready to present the main theorem.
\begin{theorem}\label{thm:sufficient_cond}
    The sequential review mechanism $\mathcal{M}^s=(\ProbAcc, \ProbRev, \bm{\mu})$ is truthful if $\ProbAcc$, $\ProbRev$ and $\bm{\mu}$ are monotone.
\end{theorem}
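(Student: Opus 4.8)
The plan is to prove truthfulness by an exchange argument: show that any reported permutation $\pi$ can be transformed into the truthful one $\pi^*$ via a sequence of adjacent transpositions, each of which (weakly) increases the author's expected utility. Since adjacent transpositions generate the symmetric group and the truthful ranking is the unique sorted order, it suffices to show that whenever two adjacently-ranked papers are out of order (the paper in round $i$ has a signal no larger than the one in round $i+1$), swapping them into sorted order does not decrease $U_a$. Actually, since the author only has a noisy signal, I would phrase the comparison in terms of the distribution of review scores: conditioned on the author's information, a paper with a higher signal has review scores that first-order stochastically dominate those of a lower-signal paper (because $r_i = q_i + \epsilon_i$ and the signal shifts the posterior on $q_i$). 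So the core lemma to establish is: fixing everything in rounds $1,\dots,i-1$ and $i+2,\dots,n$, if paper $A$ has review-score distribution stochastically dominating that of paper $B$, then placing $A$ in round $i$ and $B$ in round $i+1$ yields at least as much expected author utility as the reverse.

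To prove that core lemma I would condition on the review state $\phi_i$ entering round $i$ (which is unaffected by the swap, by construction) and on the realized review scores of the two papers. The author's total utility decomposes as: utility from rounds $\le i-1$ (unchanged), plus the reward contributions of the two swapped papers in rounds $i$ and $i+1$, plus the continuation utility from round $i+2$ onward, which depends only on the state $\phi_{i+2}$ produced. The acceptance-policy being memoryless is what makes the reward-from-the-two-papers term symmetric in which paper sits where (each paper $X$ contributes $u_a(q_X)\ProbAcc(r_X)$ times the probability its round is actually reached), so the only asymmetry is through (a) the probability round $i+1$ is reached — governed by $\ProbRev$ applied to $\mu_i(\phi_i, r)$ — and (b) the downstream state $\phi_{i+2} = \tilde\mu(r_1, r_2 \mid \phi_i)$. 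For (a), monotonicity in score tells us that putting the higher score first produces a stochastically higher state, and monotonicity of $\ProbRev$ turns that into a higher probability of reviewing the second paper; since the second paper carries nonnegative reward, this helps. For (b), I would couple the two scenarios: when the two review scores are $\{r, r'\}$ with $r' \ge r$, monotonicity in ordering (the first part of condition 3) gives that $\tilde\mu(r', r\mid\phi_i)$ dominates $\tilde\mu(r, r'\mid\phi_i)$, and the continuation utility is monotone in $\phi_{i+2}$ by induction on $n$ (the tail mechanism is itself a sequential review mechanism with monotone components, so its value is nondecreasing in the starting state — this should be a separate, easy induction).

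The subtle point — and I expect this to be the main obstacle — is that the author does not get to choose which of the two papers is accepted; the scores $r_1, r_2$ are random, and in the "sorted" arrangement the larger score might land on paper $B$ or on paper $A$. So the honest comparison is between the mixture over $(r_A, r_B)$ with $A$ in front versus the mixture with $B$ in front, and because $A$'s score distribution only stochastically dominates $B$'s (they are not a pointwise shift), I need a coupling argument. I would couple $r_A$ and $r_B$ so that $r_A \ge r_B$ always (possible since $A$ dominates $B$), and then compare: arrangement "$A$ then $B$" feeds $(r_A, r_B)$ into the pair of rounds, while "$B$ then $A$" feeds $(r_B, r_A)$ — i.e. exactly the ordered-vs-swapped comparison that condition 3 is built for. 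This is also where the second, more elaborate part of monotonicity in ordering (the four-scores $r_1 \le r_2 \le r_3 \le r_4$ with $r_1 + r_4 = r_2 + r_3$ condition on max/min of states) comes in: it is needed to handle the case where, under the coupling, the acceptance events themselves get reshuffled, so that one must compare the joint distribution of (which paper's round was reached, what downstream state resulted) rather than a single scalar — the max/min bookkeeping tracks "the better of the two continuation states" versus "the worse," and stochastic dominance of that pair is what propagates through the still-to-come rounds.

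Putting it together: (1) reduce to adjacent transpositions; (2) reduce a single transposition to the two-round core lemma via stochastic dominance of the signal-conditioned score distributions and a coupling making $r_A \ge r_B$; (3) prove by induction on $n$ that the continuation value of the tail sub-mechanism is monotone in its entering state; (4) combine the memoryless-acceptance symmetry of the two papers' own rewards, monotonicity of $\ProbRev \circ \mu$ in score for the "reach the second round" term, and monotonicity in ordering for the "downstream state" term to conclude the swap is weakly improving; (5) invoke step (3) to pass the state improvement through to the final utility. The routine parts are the stochastic-dominance manipulations and the tail induction; the delicate part is setting up the coupling in step (2)–(4) so that precisely conditions 1–3 of \cref{def:mu_monotone} are the hypotheses consumed, with no gap.
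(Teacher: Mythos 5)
Your high-level skeleton --- reduce to an adjacent transposition, couple the randomness, and push a dominance statement through the tail by induction on the remaining rounds --- matches the paper's proof, and you correctly identify that the second clause of monotonicity in ordering is the load-bearing hypothesis. But the central coupling in your step (2) does not work as described. You propose to couple the two papers' review scores so that $r_A \ge r_B$ almost surely and then invoke the simple swapped-pair clause of \cref{def:mu_monotone}. In the mechanism, however, the two scores are drawn \emph{independently} within a single arrangement; a comonotone coupling with $r_A \ge r_B$ a.s. has the right marginals but the wrong joint law, and the expected utility of an arrangement is a nonlinear functional of the joint law of $(r_A, r_B)$ (through $\tilde{\mu}(r_1, r_2 \mid \phi_i)$, through whether round $i+1$ is reached, and through which of the two papers is accepted). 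So the quantity you would be bounding is not the actual expected utility of either arrangement.

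The repair, which is what the paper does, is to couple at the level of the i.i.d.\ round noises: pair each noise vector $\bm{\epsilon}$ with the vector $\hat{\bm{\epsilon}}$ obtained by swapping $\epsilon_t$ and $\epsilon_{t+1}$ (a measure-preserving involution), and prove the averaged inequality $\E[U_a(\pi')\mid\bm{\epsilon}] + \E[U_a(\pi')\mid\hat{\bm{\epsilon}}] \ge \E[U_a(\pi)\mid\bm{\epsilon}] + \E[U_a(\pi)\mid\hat{\bm{\epsilon}}]$. When $q_j - q_k \ge |\epsilon_t - \epsilon_{t+1}|$ this collapses to the per-execution dominance you describe and only the first clause of monotonicity in ordering is used. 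When the scores cross, the four realized scores satisfy $\rho_1 \le \rho_2 < \rho_3 \le \rho_4$ with $\rho_1 + \rho_4 = \rho_2 + \rho_3$ --- an identity that comes from the additive structure $r = q + \epsilon$ and that your reformulation in terms of generic stochastically ordered score distributions throws away --- one of the two $\pi'$ executions then has a strictly \emph{worse} continuation state than its $\pi$ partner, and only the joint $(\max,\min)$ dominance of the unordered pair of states rescues the comparison. Your attribution of the four-score clause to ``reshuffled acceptance events'' misplaces it: it is consumed precisely in the realizations that your $r_A \ge r_B$ coupling pretends do not occur. The remaining ingredients you list (the tail monotonicity induction, the memoryless-acceptance symmetry, the matching of reviewed and accepted papers) are sound and agree with the paper.
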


At a high level, the proof follows by coupling the realizations of the review noise. Then, due to the monotonicity of all three components of the sequential review mechanism, flipping the true order of any two papers will result in a review state that is always dominated by truthful reporting.

\begin{proof}
We first show that the proof can be reduced to the setting in which the author possesses perfect information regarding the true quality of her papers. Intuitively, this is because the author's noise affects her reasoning about the ranking of papers in the same way as the review noise. The argument is as follows. Suppose the author observes the true quality of her papers, and her best response is to truthfully rank her papers based on $q_i$ for any i.i.d.~review noise $\epsilon_i$. Then, we can always construct a new review noise $\tilde{\epsilon}_i = \epsilon_i - \xi_i$ such that the review score of paper $i$ is $r_i = s_i + \hat{\epsilon}_i$. By assumption, $\hat{\epsilon}_i$ is i.i.d.~for all papers. Consequently, it straightforwardly follows that the author's best response is to rank her papers based on the true order of $s_i$ in this case.
Therefore, in the following proof, it is sufficient to focus on the setting where the author knows the true qualities.

Consider an arbitrary untruthful ranking $\pi$. Without loss of generality, suppose under $\pi$, there exists a pair of papers such that the lower-quality one is ranked higher than the higher-quality one. Otherwise, $\pi$ only permutes papers with the same quality and is trivially (weakly) dominated by the truthful ranking. Suppose paper $j$ and $k$ are ranked in adjacent places under $\pi$, however $\pi(j) > \pi(k)$ but  $q_j > q_k$---that is, paper $j$ is better than paper $k$ but ranked slightly worse.  Let $\pi(k) = t$, so $\pi(j) = t + 1$. The existence of such two papers is guaranteed by the design of $\pi$. Then, consider a ranking $\pi'$ which switches the ordering of paper $j$ and $k$ such that $\pi'(j) = t$ and $\pi'(k) = t + 1$.   We will show that the author always weakly prefers to report $\pi'$ instead of $\pi$. Note that this is sufficient to guarantee truthfulness because this shows the author always prefers to rank a higher-quality paper ahead of an adjacent lower-quality paper.  

We apply the idea of coupling to show $\E[U_a(\pi')]\ge \E[U_a(\pi)]$, where the expectation is taken over the randomness of the review noise and the sequential review mechanism. 

To accomplish this, we will present a one-to-one matching where in any realization of outcomes,  every paper accepted in $\pi$ can be matched with a paper accepted in $\pi'$ with the same or higher quality.  

We now construct such a matching by coupling the randomness of the review noise, the state transition mapping, the review policy, and the acceptance policy.  

First, we couple the realizations of the review noise under both permutations. In particular, consider two noise vectors $\bm{\epsilon}$ and $\hat{\bm{\epsilon}}$ which are identical everywhere except that the noises in round $t$ and $t+1$ are swapped. Let $\bm{\epsilon}_t = \hat{\bm{\epsilon}}_{t+1} = \epsilon_t$ and $\bm{\epsilon}_{t+1} = \hat{\bm{\epsilon}}_t = \epsilon_{t+1}$. We couple the review noise under the two permutations such that whenever the review noise is $\bm{\epsilon}$ under $\pi$, the review noise is $\hat{\bm{\epsilon}}$ under $\pi'$; while whenever the review noise is $\hat{\bm{\epsilon}}$ under $\pi$, the review noise is $\bm{\epsilon}$ under $\pi'$.  This is possible because all the review noise terms are i.i.d.. Then, we will show that conditioned on the review noise, $\E[U_a(\pi')|\bm{\epsilon}] + \E[U_a(\pi')|\hat{\bm{\epsilon}}] \ge \E[U_a(\pi)|\bm{\epsilon}] + \E[U_a(\pi)|\hat{\bm{\epsilon}}]$, where the expectation is taken over the randomness of the sequential review mechanism.

Fixing the review noise fixes the review score. Let $\bm{r}$ and $\hat{\bm{r}}$ be the review score vectors under $\pi$ with review noise $\bm{\epsilon}$ and $\hat{\bm{\epsilon}}$ respectively, and let $\bm{r}'$ and $\hat{\bm{r}}'$ be the corresponding terms under $\pi'$. Therefore, $\bm{r}$, $\hat{\bm{r}}$, $\bm{r}'$ and $\hat{\bm{r}}'$ are identical except on the following entries.

\begin{figure*}[htb]
     \centering
     \begin{subfigure}[b]{0.7\textwidth}
         \centering
         \includegraphics[width=\textwidth]{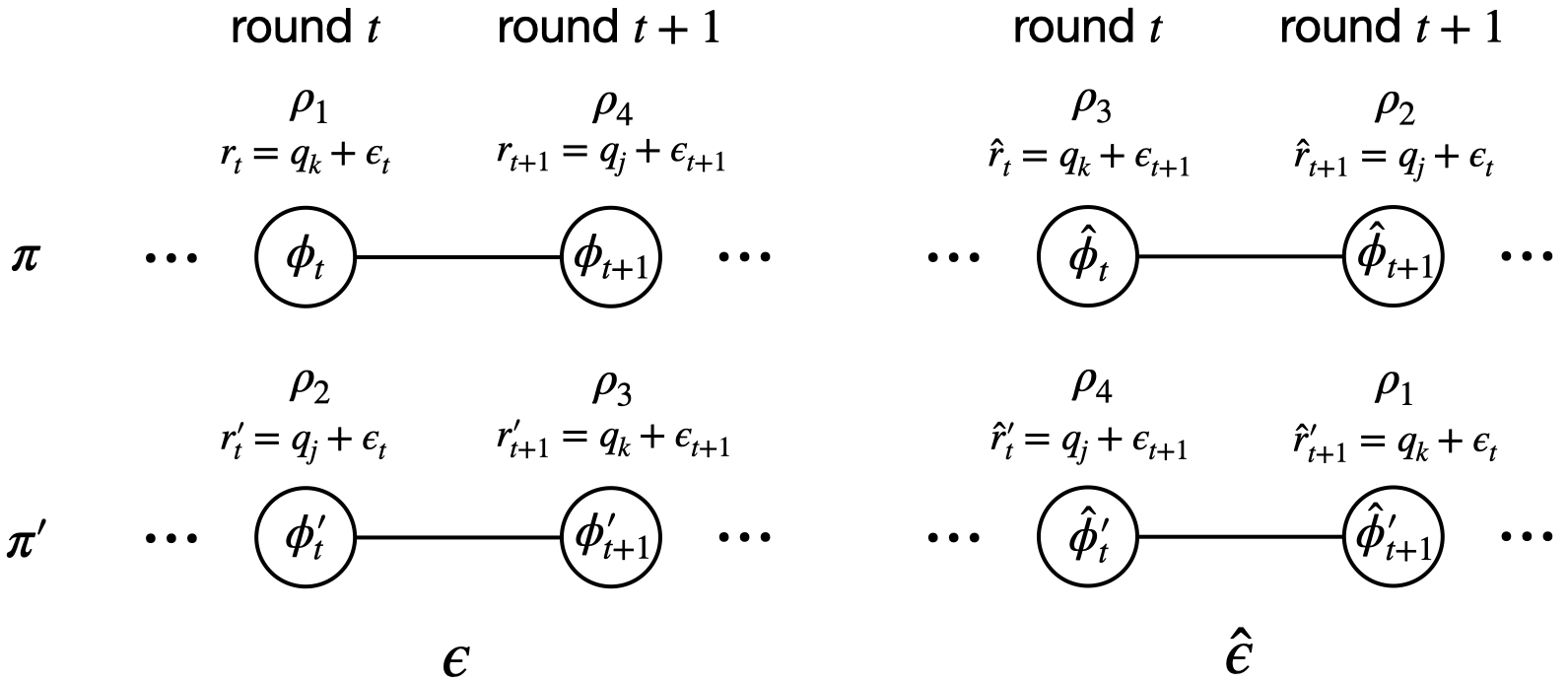}
     \end{subfigure}
     \hfill
\end{figure*}

We want to show that the sum of the expected utility of having review scores $\bm{r}'$ and $\hat{\bm{r}}'$ (review scores under $\pi'$) is larger than that of having $\bm{r}$ and $\hat{\bm{r}}$ (review scores under $\pi$). We prove this by discussing two cases each with its own coupling.   

\paragraph{Case 1.} The first case corresponds to the review noise vectors such that $q_j -q_k \geq |\epsilon_t -\epsilon_{t+1}| $. 
Here, permutation $\pi'$ has a larger review score that $\pi$ in round $t$ for both noise vectors.   We show that $\E[U_a|\bm{r}'] \ge \E[U_a|\hat{\bm{r}}]$ and $\E[U_a|\hat{\bm{r}}'] \ge \E[U_a|\bm{r}]$. 
This is sufficient to prove case 1 because permutation $\pi'$ always leads to a higher expected utility under every realization of review noise.

We present the proof of $\E[U_a|\bm{r}'] \ge \E[U_a|\hat{\bm{r}}]$ while the proof of $\E[U_a|\hat{\bm{r}}'] \ge \E[U_a|\bm{r}]$ is analogous. Let $\bm{\phi}'$ and $\bm{\hat{\phi}}$ be the vectors of review states under $\bm{r}'$ (corresponding to $\pi'$) and $\hat{\bm{r}}$ (corresponding to $\pi$) respectively, where each entry is the review state in round $i$. Let $\rho_1 = r_t=\hat{r}'_{t+1}$ and $\rho_4 = \hat{r}'_t=r_{t+1}$. We first show that in any round $i$, the distribution of $\phi'_i$ first-order stochastic dominates that of $\hat{\phi}_i$. 


    The argument follows by discussing the following four cases which depend on where $i$ is.
    \begin{enumerate}
    \item If $i\le t$, the statement holds because the review scores are identical. Then, the review state distribution must be identical for any round before $t$.
    \item If $i=t + 1$, the statement holds because $\bm{\mu}$ is monotone in score. Since $r'_t \ge r_t$, permutation $\pi'$ has a larger score than $\pi$ in round $i$.
    \item If $i=t + 2$, the statement holds because $\bm{\mu}$ is monotone in ordering. In particular, because conditioned on the same review state $\phi_t$, $\tilde{\mu}(r', r|\phi_t)$ first-order stochastic dominates $\tilde{\mu}(r, r'|\phi_t)$ for every $r'\ge r$. 
    \item If $i>t + 2$, the statement holds because $\bm{\mu}$ is monotone in state, so this holds by induction.  
\end{enumerate}

Now, we couple the review states such that $\phi'_i\succeq \hat{\phi}_i$ for any round $i$. Then, we further couple the randomness of $\ProbRev$. Let $X_i', \hat{X}_i\in [0,1]$ be random variables of two independent samples from $[0,1]$ uniformly at random such that the review process terminates in round $i$ if and only if $X_i'> \ProbRev(\phi'_i)$ and $\hat{X}_i> \ProbRev(\hat{\phi}_i)$ under $\pi'$ and $\pi$ respectively. We couple these random variables such that $X'_i = \hat{X}_i$ for any round $i$. Because $\ProbRev$ is monotone and $\phi'_i\succeq \hat{\phi}_i$ for any $i$, $\ProbRev(\phi'_i)\ge \ProbRev(\hat{\phi}_i)$. Therefore, in any realization of outcomes, whenever the review process terminates under $\pi'$, it also terminates under $\pi$. 

Fixing an outcome for which papers are reviewed, let $\hat{I}$ and $I'$ be the set of rounds in which papers are reviewed under $\pi$ with $\bm{\hat{\mu}}$ and $\pi'$ with $\bm{\mu}$ respectively. 
We would like to show there exists a one-to-one matching $\sigma: \hat{I} \rightarrow I'$ such that for all $i \in \hat{I}$, the paper reviewed in round $\sigma(i) \in I'$ under $\pi'$ has a weakly higher quality and review score than the paper reviewed in round $i$ under $\pi$.

We break up the proof into two cases depending on if $t + 1 \in I'$ (i.e.~the review process under $\pi'$ and $\bm{\epsilon}$ terminates before round $t+1$). If $t + 1 \not\in I'$ (e.g. $\pi$ terminates before round $t+1$), then $\sigma$ is just the identity map.  We know that the process in $\pi$ also terminates before round $t+1$, and that the state, paper quality, and review score in $\pi'$ coordinate-wise dominate those of $\pi$.  

If the process of $\pi'$ terminates after round $t+1$, we let $\sigma(t) = t+1$, $\sigma(t+1) = t$, and $\sigma(i) = i$ for all other $i \in \hat{I}$.  Again, for all $i \in \hat{I}$, the paper in round $\sigma(i)$ under $\pi'$ has both weakly higher quality and weakly higher scores than paper in round $i$ in $\pi$.  We can first verify this for $i \in \{t, t+1\}$, where the papers are swapped under $\pi$ and $\pi'$. If the papers in these two rounds are reviewed in $\pi$, they are also reviewed in $\pi'$, and, by inspection, have the same quality and review score. For $i \not\in \{t, t+1\}$, the claim holds because $\bm{r}'$ and $\hat{\bm{r}}$ are identical, the corresponding qualities of papers are equal, and any paper reviewed in $\pi$ is also reviewed in $\pi'$.





Finally, we couple the randomness of $\ProbAcc$. Let $Z_i', \hat{Z}_i\in [0,1]$ be random variables of two independent samples from $[0,1]$ uniformly at random such that the paper in round $i$ is accepted if and only if $Z'_i\le \ProbAcc(r'_i)$ and $\hat{Z}_i\le \ProbAcc(\hat{r}_i)$ under $\pi'$ and $\pi$ respectively. Fix a realization of the review outcome, let $\sigma$ be the one-to-one matching such that whenever the paper in round $i$ is reviewed under $\pi$, the paper in round $\sigma(i)$ is reviewed under $\pi'$. We have shown that such a $\sigma$ always exists. Then, we couple the random variables such that $Z'_{\sigma(i)} = \hat{Z}_i$ for any round $i$. Because $\ProbAcc$ is monotone, whenever a paper is reviewed and accepted under $\pi$, a paper with a higher review score and a higher quality is reviewed and accepted under $\pi'$. This completes the proof of case 1.

\paragraph{Case 2.}
The second case corresponds to the review noise vectors such that $q_j -q_k < |\epsilon_t -\epsilon_{t+1}| $. In this case, we cannot isolate a pair of expectations as we have done in case 1 but have to prove the theorem by directly showing that $\E[U_a|\bm{r}'] + \E[U_a|\hat{\bm{r}}'] \ge \E[U_a|\bm{r}] + \E[U_a|\hat{\bm{r}}]$.

We couple the randomness of $\bm{\mu}$ and $\ProbRev$. For any realization of the review outcomes, we compare the set of the papers that are reviewed under $\pi$ (for both $\bm{\epsilon}$ and $\hat{\bm{\epsilon}}$ so there are at most $2n$ reviewed papers) and the set of papers that are reviewed under $\pi'$. For the set of the reviewed papers, we want to find a one-to-one matching such that every paper that is reviewed under $\pi$ is matched with a paper with both a higher quality and a higher score that is reviewed under $\pi'$.

We first introduce some notations. Let $\bm{\phi}'$, $\hat{\bm{\phi}}'$, $\bm{\phi}$ and $\hat{\bm{\phi}}$ be the vector of review states conditioned on $\bm{r}'$, $\hat{\bm{r}}'$, $\bm{r}$ and $\hat{\bm{r}}$ respectively. Let $\rho_1 = r_t=\hat{r}'_{t+1}$, $\rho_2 = r'_t=\hat{r}_{t+1}$, $\rho_3 = \hat{r}_t=r'_{t+1}$ and $\rho_4 = \hat{r}'_t=r_{t+1}$. We have $\rho_1\le \rho_2<\rho_3\le \rho_4$. Furthermore, let $X_i, \hat{X}_i\in [0,1]$ be random variables of two independent samples from $[0,1]$ uniformly at random such that the review process terminates in round $i$ if and only if $X_i> \ProbRev(\phi_i)$ and $\hat{X}_i > \ProbRev(\hat{\phi}_i)$ under permutation $\pi$ respectively. Let $X'_i$ and $\hat{X}'_i$ be the analogous variables under $\pi'$. We couple these random variables such that $X_i = X'_i = \hat{X}_i = \hat{X}'_i$ for any round $i$. Now, we couple the review states and construct the matching of papers which will allow us to couple the acceptances just as before.  
\begin{enumerate}
    \item For any $i\le t-1$, we couple the review states such that $\phi_i \sim \phi'_i \sim \hat{\phi}_i \sim \hat{\phi}'_i$. In this case, because the papers and review scores in the first $t-1$ rounds are identical for any discussed review process, all the discussed review processes must terminate in the same round (if they ever terminate before round $t$), resulting in the same set of papers being reviewed under the two permutations. Therefore, we construct the matching such that every paper in round $i$ under $\pi$ is paired with the same paper in round $i$ under $\pi'$.

    \item For papers in round $t$ and $t+1$, we construct the matching by discussing different cases. First note that we can couple the review states such that $\phi_t \sim \phi'_t \sim \hat{\phi}_t \sim \hat{\phi}'_t$. Therefore, we only have to discuss the papers in round $t$ and round $t+1$ when $\phi_t \sim \phi'_t \sim \hat{\phi}_t \sim \hat{\phi}'_t\succ \omega$. Otherwise, because $X_t=X'_t=\hat{X}_t=\hat{X}'_t$, none of the papers in round $t$ and $t+1$ are reviewed and the matching in the previous step guarantees the same set of papers are reviewed under two permutations.
    
    Now, we couple the review states. Because $\rho_1 < \rho_2$ and $\rho_3 < \rho_4$, we can couple the realizations of the states in round $t+1$ so that $\phi'_{t+1}\succeq \phi_{t+1}$ and $\hat{\phi}'_{t+1}\succeq \hat{\phi}_{t+1}$ since $\bm{\mu}$ is monotone in score. We construct the matching as follows.
    \begin{itemize}
             \item  If $\phi'_{t+1} \sim \hat{\phi}'_{t+1} \sim \omega$, only the papers in the first $t$ rounds are reviewed. We match the paper with score $r_t = \rho_1$ with the paper and score $r'_t = \rho_2$ and match the paper with score $\hat{r}_t = \rho_3$ with the paper with score $\hat{r}'_t = \rho_4$. 
             \item If $\phi'_{t+1} \succ \hat{\phi}'_{t+1} \sim \omega$, papers with review scores $\hat{r}'_{t+1}$ and $\hat{r}_{t+1}$ are not reviewed, the paper with score $r'_{t+1}$ is reviewed, and the paper with score $r_{t+1}$ may or may not be reviewed. In this case, we apply the following matching: the paper with score $r_t = \rho_1$ is matched with the paper with score $r'_t = \rho_2$, the paper with score $r_{t+1} = \rho_4$ is matched with the paper with score $\hat{r}'_t = \rho_4$, and the paper with score $\hat{r}_t = \rho_3$ is matched with the paper with score $r'_{t+1} = \rho_3$.
             \item If $\hat{\phi}'_{t+1} \succ \phi'_{t+1} \sim \omega$, papers with review scores $r'_{t+1}$ and $r_{t+1}$ are not reviewed, the paper with score $\hat{r}'_{t+1}$ is reviewed while the paper with score $\hat{r}_{t+1}$ may or may not be reviewed. In this case, we apply the following matching: the paper with score $r_t = \rho_1$ is matched with the paper with score $\hat{r}'_{t+1} = \rho_1$, the paper with score $\hat{r}_t = \rho_3$ is matched with the paper with score $\hat{r}'_t = \rho_4$, and the paper with score $\hat{r}_{t+1} = \rho_2$ is matched with the paper with score $r'_t = \rho_2$.
             \item If $\hat{\phi}'_{t+1} \succ \omega$ and $\phi'_{t+1} \succ \omega$, all papers in the first $t+1$ rounds are reviewed under $\pi'$, while the papers in round $t+1$ may or may not be reviewed under $\pi$. In this case, we match the paper with review score $\rho_j$ under $\pi$ with the paper with review score $\rho_j$ under $\pi'$ for $j\in \{1,2,3,4\}$.
    \end{itemize}
    It is easy to see that in any case, the above matching guarantees that $\pi'$ reviews papers with pairwise higher qualities and scores than $\pi$.
    \item For any $i\ge t+2$, we couple the review states after round $t+2$. Let $\tilde{\phi}_i = (\max(\phi_i, \hat{\phi}_i), \min(\phi_i, \hat{\phi}_i))$ be a pair of review states in round $i$ under permutation $\pi$ such that the first entry is always the better state between $\phi_i$ and $\hat{\phi}_i$ while the second entry is the worse one. Let $\tilde{\phi}'_i$ be the analogous notation for $\pi'$. 
    We will show that the distribution of $\tilde{\phi}'_i$ first order stochastic dominates the distribution of $\tilde{\phi}_i$.
    \begin{itemize}
        \item If $i=t + 2$, the statement holds because $\bm{\mu}$ is monotone in ordering. Suppose without loss of generality that $\epsilon_t \le \epsilon_{t+1}$ while the proof of the other case is analogous. Then, by definition, $\phi'_{t+2}\sim \tilde{\mu}(\rho_2, \rho_3)$, $\hat{\phi}'_{t+2}\sim \tilde{\mu}(\rho_4, \rho_1)$, $\phi_{t+2}\sim \tilde{\mu}(\rho_1, \rho_4)$, $\hat{\phi}_{t+2}\sim \tilde{\mu}(\rho_3, \rho_2)$. Therefore, conditioned on the same review state $\phi_{t}$, the distribution of $\tilde{\phi}'_{t+2}$ first-order stochastic dominates that of $\tilde{\phi}_{t+2}$. This is why we need the definition of the monotonicity of ordering (\cref{def:mu_monotone}).
        \item If $i>t + 2$, the statement holds because $\bm{\mu}$ is monotone in state. This analysis is analogous to the analysis in case 1.
    \end{itemize}
    Now, we can couple the review states such that $\tilde{\phi}'_i\succeq \tilde{\phi}_i$ for any $i\ge t+2$. We construct the one-to-one matching such that every paper in round $i$ under $\pi$ is matched with the same paper in round $i$ under $\pi'$. Because we have coupled $X$s and $\ProbRev$ is monotone, for every paper in round $i\ge t+2$ that is reviewed under $\pi$, our matching guarantees that the same paper will be reviewed under $\pi'$.
\end{enumerate}

Lastly, we complete the proof by coupling the randomness of the acceptance policy $\ProbAcc$. This straightforwardly follows from the same argument in case 1. Therefore, every paper that is accepted under $\pi$ can be matched with an accepted paper with weakly higher quality and score under $\pi'$.

\end{proof}

\subsection{The memoryless Coin-Flip Mechanism}
\label{subsec:coin-flip}

Here, we provide an example of how to use our framework to design a truthful sequential review mechanism. In this example, the acceptance of a paper in round $i$ will guarantee the review of the paper in the next round; while if a paper is rejected, the mechanism will review the paper in round $i+1$ with probability $\rho(r_i)$ determined by the review score of the paper in round $i$. Furthermore, the mechanism always reviews the first paper. We call this mechanism the \emph{memoryless coin-flip mechanism}. 
 
To map this mechanism into the sequential review mechanism framework, we first define the review states. Consider $\phi_i = (\alpha, \gamma) \in\{0,1\}^2$ for any round $i>1$, where the first entry indicates whether the paper in round $i-1$ has been accepted (with $1$ being accepted and $0$ being rejected), and the second entry indicates whether a (biased) coin flip has an outcome of heads (with $1$ being heads and $0$ being tails). If at least one of $\alpha$ and $\gamma$ is $1$, the next paper will be reviewed; otherwise, the review process terminates. That is, $(0,0) = \omega$. Note that the distribution of the review state in round $i$ only depends on the review score of round $i-1$, which implies that the review policy is ``memoryless''. Furthermore, because the first paper is always reviewed, let $\phi_1 = (1,\gamma)$. In this example, the ordering of the review states is $(1,1)\sim (1,0)\sim (0,1) \succ \omega$.

Now, fixing an acceptance policy $\ProbAcc$, we design the review policy $\ProbRev^{cf}$ and the state transition mapping $\mu_i^{cf}$ for each round $i$ as follows.

\begin{equation*}
    \ProbRev^{cf}((\alpha, \gamma)) =
  \begin{cases}
  1&\mbox{if $(\alpha, \gamma) \neq \omega$},\\
  0&\mbox{otherwise}.\\
  \end{cases}
\end{equation*}

\begin{equation*}
    \mu^{cf}_i((\alpha_i,\gamma_i), r_{i+1}) =
  \begin{cases}
  (1, 1)&\mbox{ with probability $\ProbAcc(r_{i+1})\cdot \rho(r_{i+1})$},\\
  (1, 0)&\mbox{ with probability $\ProbAcc(r_{i+1})\cdot (1-\rho(r_{i+1}))$},\\
  (0, 1)&\mbox{ with probability $(1-\ProbAcc(r_{i+1}))\cdot \rho(r_{i+1})$},\\
  \quad\omega&\mbox{ with probability $(1-\ProbAcc(r_{i+1}))\cdot (1-\rho(r_{i+1}))$},
  \end{cases}
  \end{equation*}
where $\rho: \mathbb{R}\rightarrow [0,1]$ maps from a review score to a probability of reviewing the next paper. Note that the memoryless coin-flip mechanism reduces to the parallel review mechanism (which unconditionally reviews all papers) when $\rho(r) = 1$ for any review score $r$, and reduces to the naive sequential review mechanism when $\rho(r) = 0$ for any $r$.
Because the transition mapping does not depend on the index of round, we thus omit the subscript of $\mu$ while discussing the memoryless coin-flip mechanism.

Now, we show that the memoryless coin-flip mechanism is truthful by mapping it to the sufficient conditions of truthfulness as shown in \cref{thm:sufficient_cond}. We defer the proof to \cref{app:CF_proof}.

\begin{prop}\label{thm:cf-truth}
    The memoryless coin-flip mechanism $(\ProbAcc, \ProbRev^{cf}, \bm{\mu}^{cf})$ with function $\rho$ is truthful if $\ProbAcc$ is monotone and $\rho$ is non-decreasing.
\end{prop}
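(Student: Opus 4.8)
The plan is to obtain \cref{thm:cf-truth} directly from \cref{thm:sufficient_cond}: it suffices to verify that $\ProbAcc$, $\ProbRev^{cf}$, and $\bm{\mu}^{cf}$ are all monotone in the three senses required by that theorem. Monotonicity of $\ProbAcc$ is exactly the hypothesis. For the review policy, the only observation needed is that $\ProbRev^{cf}$ takes the value $1$ on every state other than $\omega$ and the value $0$ on $\omega$; since the state ordering on $\{(1,1),(1,0),(0,1),\omega\}$ has exactly two indifference classes --- the ``continue'' class $\{(1,1),(1,0),(0,1)\}$ and $\{\omega\}$, with every continue state strictly above $\omega$ --- the inequality $\ProbRev^{cf}(\phi'_i)\ge\ProbRev^{cf}(\phi_i)$ holds whenever $\phi'_i\succeq\phi_i$, so $\ProbRev^{cf}$ is monotone.

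The substantive part is checking that $\bm{\mu}^{cf}$ satisfies \cref{def:mu_monotone}, and the structural fact that makes all three clauses collapse is that $\mu^{cf}$ is \emph{memoryless}: from any non-terminal input state the output distribution depends on the round only through the review score $r$, and does so only through the single ``survival probability'' $p(r):=1-(1-\ProbAcc(r))(1-\rho(r))$, namely the probability that the output state lies in the continue class; from $\omega$ the process remains at $\omega$ by the framework's convention (the case split defining $\mu^{cf}$ does not list $\omega$, so one invokes $\mu(\omega,\cdot)=\omega$). Because the state space has only two $\succeq$-classes, a first-order stochastic dominance comparison between two state distributions reduces to comparing their survival probabilities. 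Since $\ProbAcc$ and $\rho$ are non-decreasing, $p$ is non-decreasing, which is exactly monotonicity in score; monotonicity in state is immediate because for non-terminal inputs the output law does not depend on the input state at all, while from $\omega$ the output is $\omega$, the minimal state.

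For monotonicity in ordering I would compute the two-round composition $\tilde{\mu}(r_1,r_2\mid\phi_i)=\mu_{i+1}(\mu_i(\phi_i,r_1),r_2)$: for non-terminal $\phi_i$ it lands in the continue class with probability $p(r_1)\,p(r_2)$ (survive round $i$, then survive round $i+1$) and equals $\omega$ otherwise, while for $\phi_i=\omega$ it is identically $\omega$. This probability is symmetric in $r_1$ and $r_2$, so $\tilde{\mu}(r',r\mid\phi_i)$ and $\tilde{\mu}(r,r'\mid\phi_i)$ are equal in distribution; likewise $\tilde{\mu}(r_4,r_1\mid\phi_i)\stackrel{d}{=}\tilde{\mu}(r_1,r_4\mid\phi_i)$ and $\tilde{\mu}(r_2,r_3\mid\phi_i)\stackrel{d}{=}\tilde{\mu}(r_3,r_2\mid\phi_i)$, so in the four-score configuration the pairs $(X,Y)$ and $(X',Y')$ have the same joint law and hence $(\bar{Z},\underbar{Z})\stackrel{d}{=}(\bar{Z}',\underbar{Z}')$. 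All the dominances required in \cref{def:mu_monotone} thus hold with equality --- the balance assumption $r_1+r_4=r_2+r_3$ is never needed --- so $\bm{\mu}^{cf}$ is monotone and \cref{thm:sufficient_cond} applies, giving truthfulness.

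I do not expect a genuine obstacle: the work is entirely in recognizing that the elaborate ``monotone in ordering'' requirement is essentially vacuous for this mechanism, because the coarse two-class state ordering together with memorylessness forces the relevant state distributions to coincide rather than merely dominate one another. The only places that need a little care are treating the input state $\omega$ separately in each clause, and confirming that the coupling of $X,Y,X',Y'$ in \cref{def:mu_monotone} is the natural (independent, identically structured) one, so that equality of the marginal laws upgrades to equality of the joint laws of $(X,Y)$ and $(X',Y')$.
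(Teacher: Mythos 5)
Your proposal is correct and follows essentially the same route as the paper's proof: verify the two-class structure of the state space, reduce each first-order stochastic dominance claim to a comparison of survival probabilities, and observe that the two-round termination probability $1-p(r_1)p(r_2)$ is symmetric in the two scores (the paper isolates this symmetry as its Lemma on exchangeability), so that all the ordering conditions hold with distributional equality. Your remark that the joint law of $(X,Y)$ must be pinned down by the natural independent coupling is, if anything, slightly more careful than the paper's step from equal marginals of $\bar{Z},\bar{Z}'$ to equality of the pairs.
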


\subsection{The Credit Pool Mechanism}

Another example of the sequential reviewing framework implements the idea of a reputation system. 
Suppose the conference keeps a record of a credit pool, which is initialized at $B_1\ge 0$. For every reviewed paper, the mechanism will increase (or decrease) the credit pool by a credit score which is determined by the review of that paper. Let $\beta: \mathbb{R}\rightarrow \mathbb{R}$ be a credit function that maps from a review score to a review credit. Note that $\beta$ can be negative, indicating a punishment of papers with low review scores. The credit pool mechanism reviews the paper in round $i$ if and only if $B_i\ge 0$.  Details of the credit pool mechanism are presented in \ref{alg:credit_pool}. 


\begin{mechanism}[htbp]
 \KwIn{The ordered review scores $\bm{r}$ with $r_t$ being the score of the paper ranked by the author in round $t$, the acceptance policy $\ProbAcc$, the initial credit $B_1$ and the credit function $\beta$.}
 \KwOut{The acceptance decision of each of the $n$ papers $\bm{\alpha}$.}
 Initialize the credit pool as $B=B_1\ge 0$ and $t=1$;\\
 \While{$B\ge 0$}{
    Flip a biased coin with a probability of heads equal to $\ProbAcc(r_t)$, and the outcome is $\gamma_t$ (with 1 being heads and 0 being tails);\\
    \eIf{$\gamma_t = 1$}{
    $\alpha_t = 1$;\\
    }{
    $\alpha_t = 0$;\\
    }
    $B \mathrel{+} = \beta(r_t)$;\\
    $t \mathrel{+} = 1$;\\
 }
 $\alpha_j = 0$ \textbf{for} $j\in \{t, \ldots,n\}$.

 \caption{The credit pool mechanism.}\label{alg:credit_pool}
\end{mechanism}

Speaking in the language of the sequential review mechanism framework, we first define the review state as the credit in the pool, i.e.~$\phi_i = B_i$ for round $i$. States are ordered in terms of their credit, however, and any negative credit will correspond to the termination state $\omega$. More formally, the ordering of the states is $B'\succeq B \succ \omega$ for any $B'\ge B\ge 0$. We define the review policy as
\begin{equation*}
    \ProbRev^{cp}(B) =
  \begin{cases}
  1&\mbox{if $B\ge 0$},\\
  0&\mbox{otherwise}.\\
  \end{cases}
\end{equation*}
Furthermore, the state transition mapping $\bm{\mu}^{cp}$ is deterministic such that $\bm{\mu}^{cp}(\omega, r) = \omega$ with probability of 1 and conditioned on $B_i\ge 0$,
\begin{equation*}
    \mu^{cp}(B_i, r_{i}) = B_i + \beta(r_{i}) \quad\text{ with probability 1}.
\end{equation*}
Again, we can omit the subscript $i$ since $\mu^{cp}$ does not depend on the review round. The following theorem indicates sufficient conditions for a credit pool mechanism to be truthful.

\begin{prop}\label{thm:cp_truth}
    The credit pool mechanism $(\ProbAcc, \ProbRev^{cp}, \bm{\mu}^{cp})$ with credit function $\beta$ is truthful if $\ProbAcc$ is monotone and $\beta$ is increasing and convex.
\end{prop}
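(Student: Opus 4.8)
The plan is to verify that the credit pool mechanism $(\ProbAcc, \ProbRev^{cp}, \bm{\mu}^{cp})$ satisfies the three monotonicity hypotheses of \cref{thm:sufficient_cond} — monotonicity of $\ProbAcc$, of $\ProbRev^{cp}$, and of $\bm{\mu}^{cp}$ — and then invoke that theorem. Monotonicity of $\ProbAcc$ is assumed. Monotonicity of $\ProbRev^{cp}$ is immediate from its definition: it is the indicator $\mathbbm{1}[B \ge 0]$, which is weakly increasing in $B$, and under the ordering $B' \succeq B \succ \omega$ for $B' \ge B \ge 0$ this is exactly the required statement $\ProbRev^{cp}(\phi') \ge \ProbRev^{cp}(\phi)$ whenever $\phi' \succeq \phi$. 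So the substance is checking that $\bm{\mu}^{cp}$ is monotone in the sense of \cref{def:mu_monotone}, i.e.\ monotone in score, in state, and in ordering.

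First I would dispatch monotonicity in score and in state. Since $\mu^{cp}$ is deterministic, $\mu^{cp}(B, r) = B + \beta(r)$ (with $\omega$ absorbing), first-order stochastic dominance just reduces to a pointwise inequality on the resulting credit. Monotonicity in score: for $r' \ge r$, since $\beta$ is increasing we get $B + \beta(r') \ge B + \beta(r)$ — but I must be slightly careful about the ordering convention, because once the credit goes negative the state collapses to $\omega$; the key point is that $B + \beta(r') \ge B + \beta(r)$ implies the former state is weakly higher-ordered (if $B+\beta(r) \ge 0$ then $B + \beta(r') \ge 0$ too and we compare credits; if $B + \beta(r) < 0$ then its state is $\omega$, the bottom, so domination is automatic). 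Monotonicity in state: for $B' \ge B \ge 0$ and any $r$, $B' + \beta(r) \ge B + \beta(r)$, and the same case analysis about possibly hitting $\omega$ applies. I should also note $\mu^{cp}(\omega, \cdot) = \omega$ handles the termination state trivially in both cases.

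The main obstacle is monotonicity in ordering, which is where convexity of $\beta$ enters. Here $\tilde\mu(r_1, r_2 \mid B) = \mu^{cp}(\mu^{cp}(B, r_1), r_2)$, and since everything is deterministic and additive this equals $B + \beta(r_1) + \beta(r_2)$ — except that the intermediate step matters: the process only proceeds to round $i+1$ if $B + \beta(r_1) \ge 0$, so $\tilde\mu(r_1, r_2 \mid B) = \omega$ whenever $B + \beta(r_1) < 0$. The first part of the ordering condition asks that for $r' \ge r$, $\tilde\mu(r', r \mid B)$ dominate $\tilde\mu(r, r' \mid B)$: the final credits $B + \beta(r') + \beta(r)$ agree, but putting the larger score first makes the intermediate credit $B + \beta(r')$ at least as large as $B + \beta(r)$, so the "larger-first" ordering is at least as likely to survive to the final state — hence dominates. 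The second, more delicate part involves four scores $r_1 \le r_2 \le r_3 \le r_4$ with $r_1 + r_4 = r_2 + r_3$. I would first observe that convexity of $\beta$ gives $\beta(r_1) + \beta(r_4) \ge \beta(r_2) + \beta(r_3)$ (the standard convexity/majorization inequality, since $(r_1, r_4)$ majorizes $(r_2, r_3)$ when they have equal sums and $r_1 \le r_2 \le r_3 \le r_4$). Then I must track the four deterministic trajectories: $X \sim \tilde\mu(r_4, r_1 \mid B)$ has final credit $c_{14} := B + \beta(r_1) + \beta(r_4)$ and survives iff $B + \beta(r_4) \ge 0$; $Y \sim \tilde\mu(r_2, r_3 \mid B)$ has final credit $c_{23} := B + \beta(r_2) + \beta(r_3)$ and survives iff $B + \beta(r_2) \ge 0$; the primed versions $X' \sim \tilde\mu(r_1, r_4 \mid B)$, $Y' \sim \tilde\mu(r_3, r_2 \mid B)$ have the same respective final credits but survive iff $B + \beta(r_1) \ge 0$, resp.\ $B + \beta(r_3) \ge 0$. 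Since all randomness has been removed, $(\bar Z, \underbar Z)$ and $(\bar Z', \underbar Z')$ are deterministic pairs, and first-order stochastic dominance is a direct coordinatewise comparison. Because $c_{14} \ge c_{23}$ (so $\bar Z$ should be the $c_{14}$-trajectory when both survive) and because the survival thresholds for the unprimed trajectories ($B + \beta(r_4) \ge 0$ and $B + \beta(r_2) \ge 0$) are each weaker than or equal to the corresponding primed thresholds ($B + \beta(r_1) \ge 0$ for $X'/X$ share final credit $c_{14}$; note $\beta(r_4) \ge \beta(r_1)$, and for $Y'/Y$ sharing $c_{23}$ we have $\beta(r_2) \le \beta(r_3)$ so actually $Y'$ survives \emph{more} easily — this asymmetry is the subtle point I must handle carefully), I would argue through the possible survival patterns (how many of $B + \beta(r_1), B + \beta(r_2), B + \beta(r_3), B + \beta(r_4)$ are nonnegative, which is monotone in the index since $\beta$ is increasing) and check in each of the few cases that the sorted pair $(\bar Z, \underbar Z)$ weakly dominates $(\bar Z', \underbar Z')$. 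The worry to resolve is the case where $B + \beta(r_3) \ge 0 > B + \beta(r_2)$: then $Y$ collapses to $\omega$ but $Y'$ survives with credit $c_{23}$; I need $X$'s survival and the value $c_{14}$ together with $\omega$ to still dominate $\{c_{14} \text{ or } \omega \text{ from } X', \; c_{23} \text{ from } Y'\}$ after sorting — this should work out because $c_{14} \ge c_{23}$ and $X$ survives at least as often as $X'$, but it is exactly the inequality chain that must be written out to complete the argument. Once monotonicity in ordering is established, \cref{thm:sufficient_cond} gives truthfulness. \qed
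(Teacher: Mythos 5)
Your proposal is correct and follows essentially the same route as the paper: check that $\ProbRev^{cp}$ and the deterministic transition $\mu^{cp}(B,r)=B+\beta(r)$ (with $\omega$ absorbing) satisfy the three monotonicity conditions of \cref{def:mu_monotone} — using the majorization inequality $\beta(r_1)+\beta(r_4)\ge\beta(r_2)+\beta(r_3)$ from convexity plus a case analysis on where the survival threshold falls among $B+\beta(r_1)\le\cdots\le B+\beta(r_4)$ — and then invoke \cref{thm:sufficient_cond}. The delicate case you flag ($B+\beta(r_3)\ge 0>B+\beta(r_2)$) is exactly one of the four cases the paper enumerates, and it resolves the way you indicate.
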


We defer the proof to \cref{app:CP_proof}.
Intuitively, the credit pool mechanism is truthful because a higher-quality paper has both a larger acceptance probability and a larger chance of increasing the review credit. Therefore, any untruthful permutation will, more likely, result in an earlier termination of the review process which always leads to a smaller utility in expectation.

\subsection{Non-truthful Examples}
We further provide a discussion on several heuristic mechanisms that are not truthful, which illustrates what can go wrong.  

\paragraph{Ranking bundles of papers.}
A plausible generalization of the sequential review mechanism is to require the author to divide her papers into several bundles and provide an order of bundles. For example, an author with four papers can be asked to report two papers as the (equally) best papers and two as the (equally) bottom papers. Then, the mechanism will review the papers in the next bundle if and only if certain conditions are satisfied based on the review scores of the papers in the previous bundle (e.g., the acceptance of at least one paper). 

However, such an idea is not truthful in general. Consider a simple example where the mechanism reviews two papers at a time and will review the next two papers if and only if at least one of the two papers is accepted. Suppose an author has six papers to submit, and two of them are outstanding and basically guaranteed acceptance while the remaining four papers are of borderline quality. Suppose the author knows the true qualities. Now, the author has the incentive to put one of the two outstanding papers in the third place and rank one borderline paper in the second place. This strategy almost certainly ensures that all six papers can be reviewed. However, if the author reports the true ranking, it is more likely that only the top four papers will be reviewed, while the bottom two papers will be rejected without review. 

\paragraph{The limited credit pool mechanism.}
We present a non-truthful variant of the credit pool mechanism with a credit pool of a limited size.  The growth of the pool is halted once the accumulated credit reaches a predefined threshold. When the threshold is reached, even if subsequent papers receive high review scores, they will no longer contribute more credit to the pool.

The limited credit pool mechanism is not truthful, because authors are better-off to have their high-quality papers reviewed only when the credit pool is not full. For example, suppose an author has a lot of borderline papers with negative quality and two outstanding papers that are basically guaranteed acceptance and any one of them can fill the credit pool. Now, compared with truth-telling, the author is better off ranking one outstanding paper in the first place followed by one or two borderline papers and then the other outstanding paper. This untruthful ranking ensures that the author secures more reviews for her borderline papers than the truthful ranking.

\section{Evaluating the Threshold Sequential Review Mechanism}
\label{sec:opt_mechanism}


The previous section identifies a space of truthful mechanisms, while this section focuses on empirically evaluating the performance of the threshold sequential review mechanism. We evaluate a mechanism from two dimensions: conference utility and review burden. The former is measured by the sum of accepted papers' quality, and the latter quantifies the number of reviewed papers, while both are normalized by the total number of submitted papers. For both computational and practical considerations, we focus on the threshold sequential review mechanism, a special case of the memoryless coin-flip mechanism discussed in \cref{subsec:coin-flip}. In comparison, we use the threshold parallel review mechanism as a baseline and the threshold isotonic mechanism with the underlying ranking information as an unreachable upper bound. Therefore, when we refer to a mechanism in this section, unless otherwise specified, we simply mean the threshold implementation of it.

We conduct experiments on both a simple model with Gaussian review noise and a more complicated real-data estimated model where each paper has multiple integer-valued review scores.\footnote{The implementation of our experiments is available at \url{https://github.com/yichiz97/Sequential-Review}.}
Through Monte-Carlo simulations, we show that the sequential review mechanism can achieve conference utility that is competitive compared to the upper bound and can significantly reduce the review burden. Moreover, in the real-data estimated model, we show that the sequential review mechanism can save more than 20\% review burden compared with the baseline conditioned on a weakly better conference utility.
Because of a lack of ground truth data, in this section, we simply assume that authors observe the perfect information about the true quality of their papers.

\subsection{Mechanisms of Comparison}

Here, we introduce how we implement and optimize the sequential review mechanism, the parallel review mechanism, and the isotonic mechanism in our experiments. For all three types of mechanisms, we focus on the threshold acceptance policy, i.e.~a paper is accepted if and only if its (modified) review score is no less than a threshold.  

\paragraph{The threshold sequential review mechanism.} 
We restrict our attention to a special case of the sequential review mechanism with two special parameters $\tau^s_{acc}$ and $\tau^s_{rev}$ where  $\tau^s_{acc} \geq \tau^s_{rev}$.  The threshold sequential review mechanism is a memoryless coin-flip mechanism with $\ProbAcc$ and $\rho$ being threshold functions: $\ProbAcc(r) = 1$ if $r\ge \tau^s_{acc}$ and 0 otherwise; $\rho(r)=1$ if $r\ge \tau^s_{rev}$ and 0 otherwise. In words, the threshold sequential review mechanism accepts a paper, conditioned on it being reviewed, if its review score is larger than a threshold $\tau^s_{acc}$; and it reviews the next paper if the review score is larger than a lower threshold $\tau^s_{rev}$. We study the threshold sequential review mechanism mainly because of its simplicity, making it easy to optimize and implement in practice.  This illustrates the power of the sequential review mechanism even in a simplified form.

\paragraph{The threshold parallel review mechanism.}
We consider the parallel review mechanism as a baseline for comparison.
Under a parallel review mechanism, all papers are guaranteed to be independently reviewed. The parallel review mechanism operates without soliciting the ranking information from authors. The threshold parallel review mechanism is characterized by the acceptance threshold $\tau^\ProbAcc$ where papers are accepted if and only if their review scores are no less than $\tau^\ProbAcc$. Note that the threshold parallel review mechanism is a special case of the threshold sequential review mechanism by setting $\tau^s_{rev} = -\infty$.

\paragraph{The threshold isotonic mechanism with true ranking information.}
In general, the isotonic mechanism cannot truthfully elicit the ranking information from authors in our setting where the author's utility is not convex with respect to the review score. However, we assume the author still reports the ranking of their papers' quality truthfully and uses the isotonic mechanism with this information as the upper bound. In particular, given the original review scores, the isotonic mechanism modifies them by solving the isotonic regression conditioned on the author's ranking (details are presented in \cref{app:isotonic}). Then, a threshold acceptance policy is applied to the modified review scores such that the paper with score $r$ is accepted if and only if $r\ge \tau^i_{acc}$.

To emphasize the importance of truthfulness, we note that in the absence of truthful ranking information, the isotonic mechanism can yield a conference utility that is even worse than the baseline. We illustrate this with an example presented in \cref{app:isotonic}.  However, assuming truthfulness gives us a (likely unachievable) benchmark.   


\subsection{Gaussian Review Noise}

We first introduce a simple yet intuitive model where each paper is associated with a single review score, which is the true quality plus an additive Gaussian review noise. The simplicity of this model offers computational convenience, enabling us to efficiently explore the influence of various model parameters on the performance of the sequential review mechanism. Note that under this simple model, we again assume that there is only one author so as to present our results in a clear manner.

\subsubsection{Model and Experiment Setup}
The \emph{Gaussian review setting} $\varphi^G = (n, \mu_q, \sigma_q, \sigma_r)$ is defined by four parameters.
The conference has a Gaussian prior of the quality of the author's papers with parameters $\mu_q$ and $\sigma_q$. Thus, the author draws $n$ i.i.d.~samples from the Gaussian distribution $\mathcal{N}(\mu_q, \sigma_q)$ as the qualities of her papers. Then, the author sends the conference a ranked list of her papers in terms of their quality (hence the conference observes the true ranking of the quality of the papers). Let $\bm{q}$ be the ordered vector of paper qualities such that $q_i\ge q_j$ for any $1\le i\le j\le n$. Next, the conference draws an i.i.d.~review noise term from a zero-mean Gaussian distribution $\epsilon_i\sim \mathcal{N}(0, \sigma_r)$ for each $i\in [n]$. Finally, the conference observes one review score for each paper, $r_i = q_i+\epsilon_i$. 

Given a setting $\varphi^G$ and a mechanism $\mathcal{M}$, we use the Monte-Carlo method to estimate the expected conference utility. In particular, we draw \sisetup{group-separator={,}}\num{10000} samples of $\bm{q}$ where each $\bm{q}$ is obtained by first drawing $n$ i.i.d.~samples from $\mathcal{N}(\mu_q, \sigma_q)$ and then reordering based on the qualities from high to low. For each sample of qualities, we are able to analytically compute the expected probability that each paper is accepted (and thus the expected conference utility) for the threshold sequential review mechanism and the threshold parallel review mechanism:
\begin{itemize}
    \item For the threshold parallel review mechanism, $\Pr(\text{paper $i$ is accepted}) = 1-G_{\sigma_r}(\tau^\ProbAcc-r_i)$, where $G_{\sigma_r}$ is the c.d.f.~of the zero-mean Gaussian distribution with standard deviation $\sigma_r$.
    \item For the threshold sequential review mechanism, $\Pr(\text{paper $i$ is accepted}) = \prod_{j=1}^{i-1}\left(1-G_{\sigma_r}(\tau^s_{rev}-r_j)\right)\cdot\left(1-G_{\sigma_r}(\tau^s_{acc}-r_i)\right)$.
\end{itemize}
For the isotonic mechanism, due to the non-linear isotonic regression, we do not have a closed-form expected conference utility. Thus, we first draw $100$ review noise terms $\bm{\epsilon}$ and numerically estimate the expected conference utility by taking the average.
Finally, for all implemented mechanisms, the estimated conference utility is the average over \sisetup{group-separator={,}}\num{10000} samples of $\bm{q}$.

For each Gaussian review setting, we further optimize the threshold(s) of the three types of mechanisms using stochastic gradient descent. To estimate the gradient, we simply vary the threshold by a small step and draw \sisetup{group-separator={,}}\num{10000} samples of $\bm{q}$ to estimate the change of conference utility.


\subsubsection{Results}
\label{subsubsec:gaussian-results}

\paragraph{Conference utility. }
We first see how different parameters affect the conference utility achieved by the sequential review mechanism. Let $U_c^p$, $U_c^s$, and $U_c^i$ be the expected conference utilities under the parallel review mechanism, sequential review mechanism, and isotonic mechanism with the optimized thresholds, respectively. We are interested in the relative conference utility of the sequential review mechanism, $\hat{U}_c^s = \frac{U_c^s - U_c^p}{U_c^i - U_c^p}$. That is, the conference utility of the sequential review mechanism while normalizing the baseline $U_c^p$ to 0 and the upper bound $U_c^i$ to 1. 
As shown in Figure~\ref{fig:utility}, the sequential review mechanism exhibits a significant improvement over the baseline, with at least a $40\%$ improvement towards the upper bound across a broad range of parameter settings. Note that the error bars are particularly large for small $n$, large $\mu_q$, and small $\sigma_r$ because in these cases the difference between the performances of three mechanisms is small. Furthermore, we observe that the sequential review mechanism is particularly effective in the following three cases.
\begin{enumerate}
    \item Papers have lower qualities. In this case, due to the review noise, reviewing more papers will increase the probability of accepting more low-quality papers and thus harm the conference quality. Therefore, the sequential review mechanism that rejects lower-quality papers without reviewing them becomes more beneficial compared with the parallel review mechanism.
    \item The author has more papers. Intuitively, as the number of papers increases, more low-quality papers that are mistakenly accepted by the parallel review mechanism are rejected by the sequential review mechanism, which enlarges the gap.
    \item Reviews are noisier. In this case, the sequential review mechanism benefits the conference's utility by utilizing the author's ranking information. At a high level, the noisier the reviews are, the more the conference's decisions rely on the author's ranking information which the parallel review mechanism does not utilize.
\end{enumerate}

\begin{figure*}[htb]
     \centering
     \begin{subfigure}[b]{0.45\textwidth}
         \centering
         \includegraphics[width=\textwidth]{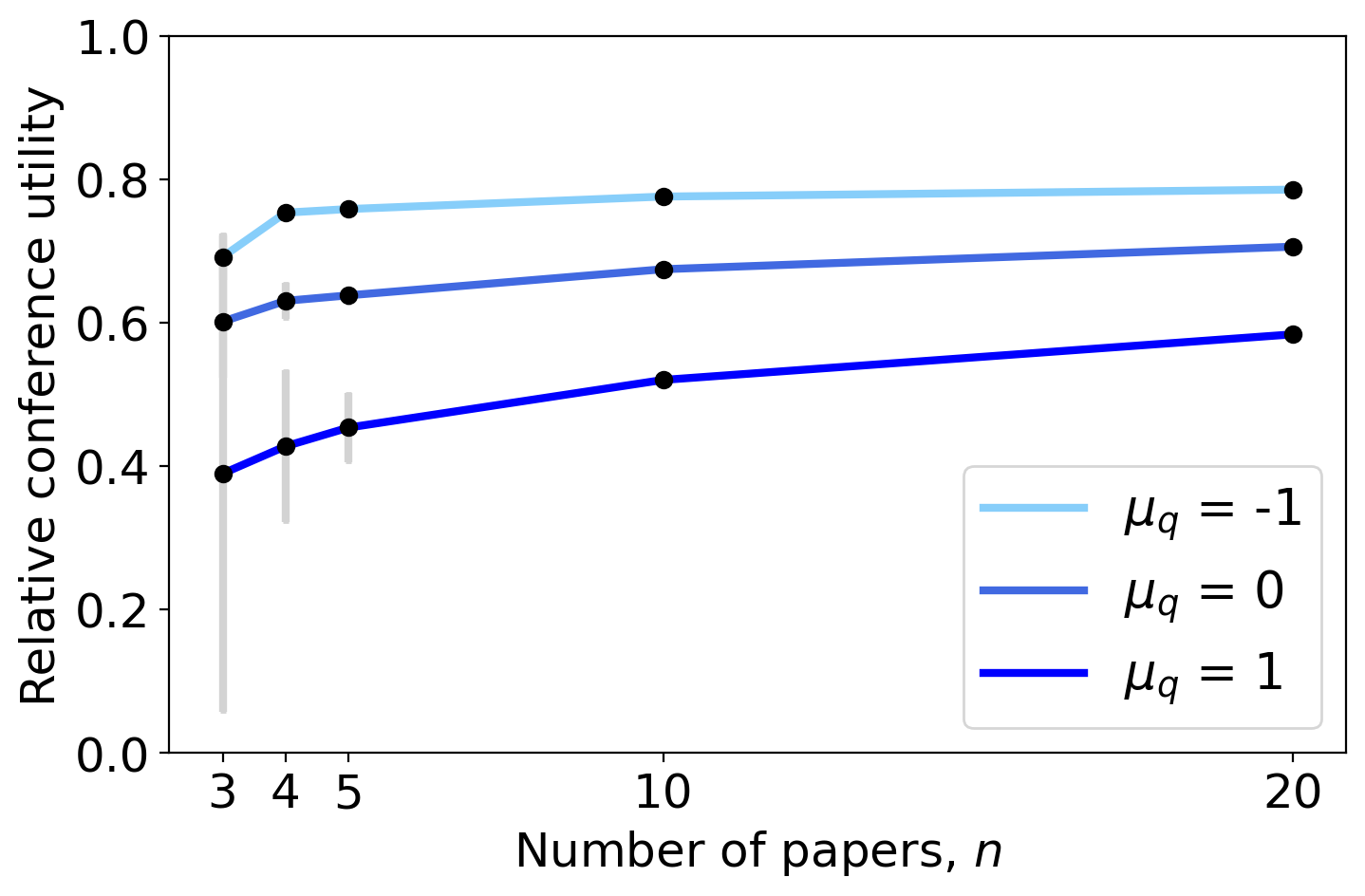}
     \end{subfigure}
     \hfill
     \begin{subfigure}[b]{0.45\textwidth}
         \centering
         \includegraphics[width=\textwidth]{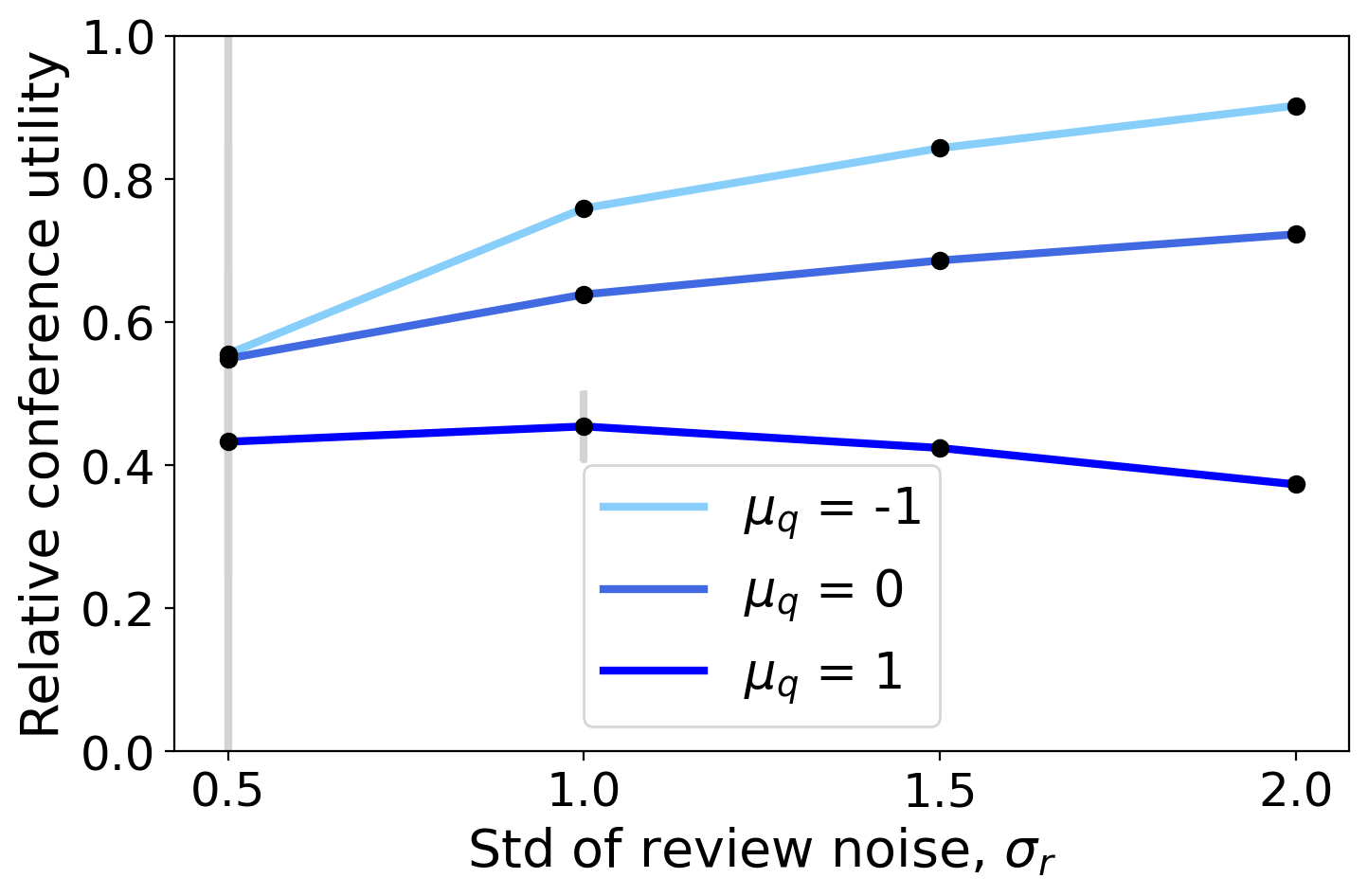}
     \end{subfigure}
     \hfill
\caption{The relative conference utility under different parameter settings. Unless otherwise specified, the default parameter setting is $\varphi^G= (n=5, \mu_q = -1, \sigma_q=2, \sigma_r = 1)$. 
}\label{fig:utility}
\end{figure*}

\paragraph{Review burden. }
Here, we study how many reviews can be saved using a sequential review mechanism. Meanwhile, we study how much improvement can be made in the average quality of the reviewed paper as the review burden decreases.

Suppose an author has $n$ papers for review, we define the \emph{review burden} of a mechanism as the expected number of papers that are reviewed. The review burden of a parallel review mechanism is thus $B^p=n$, while the review burden of a sequential review mechanism $B^s$ is weakly smaller than $n$ depending on the thresholds. We are interested in the relative review burden $\hat{B}^s = \frac{B^s}{B^p}$ conditioned on achieving the same conference utility. By definition, $0<\hat{B}^s\le 1$ and a smaller relative review burden imply that the sequential review mechanism can save more reviews compared with the parallel review mechanism without harming the conference utility. 

\Cref{fig:burden} (a) shows the effectiveness of the sequential review mechanism in reducing the review burden while \cref{fig:burden} (b) presents the ability of the sequential review mechanism to improve the average quality of the reviewed papers. Specifically, we find that the sequential review mechanism is relatively more effective when 1) the author is more likely to write low-quality papers (indicated by smaller $\mu_q$), 2) the author has more papers (indicated by larger $n$) and 3) reviews are noisier (indicated by larger $\sigma_r$).\footnote{Figures that investigating the review noise are deferred to appendix due to space limitation.} The intuition is in line with our discussions on the conference utility. Notably, even when the author has only two papers, the sequential review mechanism can still reduce the review burden by 10-30\% depending on the prior. 

\Cref{fig:burden} (b), we observe that the sequential review mechanism can significantly improve the average quality of the reviewed papers, as low-quality papers that are ranked at the bottom are more likely to be rejected without review.
In reality, the reduction in the review burden and improvement in the quality of reviewed papers may potentially improve the review quality, creating a virtuous circle that ultimately benefits the conference more.  However, this is not explicitly captured by our model, which thus likely underestimates the improvement using a sequential review mechanism.

\begin{figure*}[htb]
     \centering
     \begin{subfigure}[b]{0.45\textwidth}
         \centering
         \includegraphics[width=\textwidth]{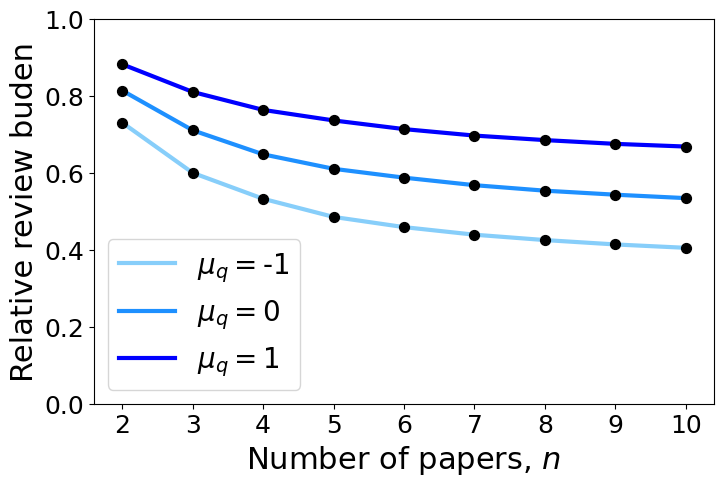}
         \caption{Relative review burden v.s.~$n$.}
     \end{subfigure}
     \hfill
     \begin{subfigure}[b]{0.45\textwidth}
         \centering
         \includegraphics[width=\textwidth]{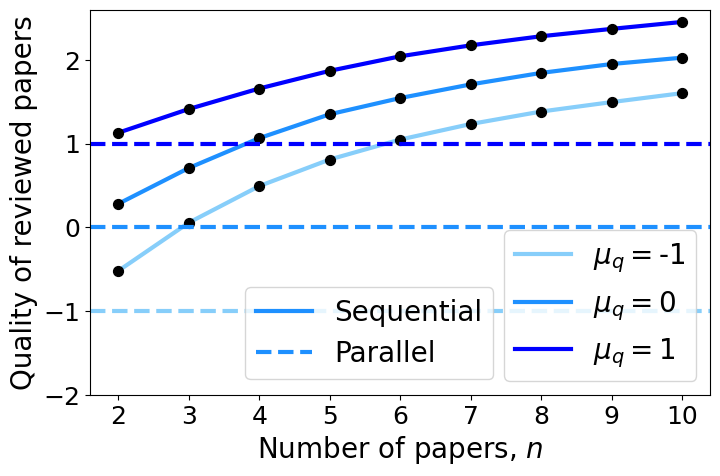}
         \caption{Average quality of the reviewed paper v.s.~$n$.}
     \end{subfigure}
     \hfill
\caption{The relative conference utility and the relative review burden under different parameter settings. Unless otherwise specified, the default parameter setting is $\varphi^G= (n=5, \mu_q = -1, \sigma_q=2, \sigma_r = 1)$.}\label{fig:burden}
\end{figure*}

\subsection{Softmax Review Noise With Real Data}\label{subsec:real_data}
The real review data has two features that are not adequately modeled by the previously discussed Gaussian noise model. First, review scores are integers, not continuous real values. Second, each paper has multiple independent review scores, rather than a single score.   To incorporate these distinctions, we present a more fine-grained model, wherein the review score is characterized by a softmax function. Furthermore, in this section, we optimize the mechanisms for the entire population, rather than an individual agent, based on the empirical quality distribution estimated from real data.

\subsubsection{Model and Experiment Setup}\label{subsec:setup} 
The parameters $(\pi_n, k, \mu_q, \sigma_q, t_r)$ defines a \emph{softmax review setting}.  Authors are ex-ante homogeneous.   The pdf of the distribution of the number of papers an author has is denoted as $\pi_n(\cdot): \mathbb{N}_+\rightarrow [0,1]$.   No matter how many papers $n_i$ author $i$ has, the qualities of the papers are generated by $n_i$ i.i.d.~samples from the same Gaussian paper quality distribution $\mathcal{N}(\mu_q, \sigma_q)$.  Then, conditioned on the quality $q$ of a paper, the conference draws $k$ i.i.d.~review scores from softmax distribution: 
\begin{equation*}
    \Pr(r = s | q) = \frac{e^{-t_r\cdot (s-q)^2}}{\sum_{s'\in \mathcal{S}} e^{-t_r\cdot (s'-q)^2}},
\end{equation*}
where $t_r$ is a temperature parameter which models the noisiness of the reviews, and $\mathcal{S}$, a finite set of integers, is the set of all possible review scores. 

The Monte-Carlo estimation process of this softmax review model is essentially the same as the Gaussian review model. The key difference lies in the thresholds of the review mechanism. Instead of setting thresholds on the review scores directly, the thresholds are set in the posterior world. In particular, conditioned on $k$ i.i.d.~review scores of a paper and the prior of paper quality, the conference can compute an expected quality of the paper. Then, a paper will be accepted (or the review process will continue) if and only if the expected quality surpasses the acceptance (review) threshold. Because of this complexity, we will not include the discussions of the isotonic mechanism, where a reasonably accurate estimate of the performance is computationally difficult.

We use real review datasets to fit the model, where a key challenge is how to deal with the coauthorship problem.  Our solution is to assign every paper to one of its authors and only simulate and use the ranking information from that author. This reduces the multi-author problem to the single-author problem which can be solved by our method. Details of the model-fitting procedure are shown as follows.
\begin{itemize}
    \item 
    First, we want to estimate $\pi_n$. We construct an author-paper mapping from the dataset that maps each author to the set of papers he/she submits. Then, the following procedure is iterated until all papers are assigned with one author:
    \begin{enumerate}[label={\arabic*)}]
        \item Greedily select the author with the largest number of papers from the current author-paper mapping, employing random tie-breaking if necessary.
        \item Assign those papers to the selected author in the previous step.
        \item Remove the author and his/her papers from the author-paper mapping.
    \end{enumerate} 
    This gives us a one-to-one mapping between authors and papers, and $\pi_n$ is set to be the empirical distribution of the number of papers each author has.
    \item While the prevailing choice is now often soliciting four reviews per paper, we set $k=3$ (the second most common choice) due to computational considerations.
    \item We use the average review scores as the true qualities of papers so that $\mu_q$ and $\sigma_q$ can be straightforwardly estimated.
    \item Finally, $t_r$ can be learned using an MAP algorithm \cite{SORENSON198285}.
\end{itemize}

\subsubsection{Datasets}
We use the public ICLR OpenReview datasets from 2021 to 2023. The datasets are reasonably large and include both the accepted and the rejected papers (withdrawn papers are excluded from our analysis), which are suitable to fit our model. Some key features of the datasets and the learned parameters are summarized in \cref{tab:datasets}, where desk-rejected papers are excluded.

\begin{table}[htb]
\centering
\begin{tabular}{|c|c|c|c|c|c|}
\hline
Year & \# of papers & \# of papers per author & $\mu_q$ & $\sigma_q$ & $t_r$ \\ \hline
2021 & 2595         & 1.806                   & 5.511   & 1.009      & 0.513 \\ \hline
2022 & 2612         & 1.834                   & 5.519   & 1.286      & 0.363 \\ \hline
2023 & 3812         & 1.926                   & 5.468   & 1.295      & 0.342 \\ \hline
\end{tabular}
\caption{Key features and learned model parameters of the recent ICLR datasets.}\label{tab:datasets}
\end{table}

Perhaps not surprisingly, both the total number of submitted papers and the average number of papers per author (where papers with coauthors are attributed to the author with the highest paper count) have been steadily increasing over the years. This trend indicates a rise in researchers entering the field and an inclination for authors to produce a larger number of papers. \Cref{fig:empirical-distribution} presents the empirical distribution of the number of papers per author for each year. Notably, the fraction of authors with only one submission decreases over time and the fraction of authors with more than $5$ papers significantly increases in 2023.

\begin{figure}[htbp]
    \centering
    \begin{subfigure}[b]{0.32\textwidth}
        \includegraphics[width=\textwidth]{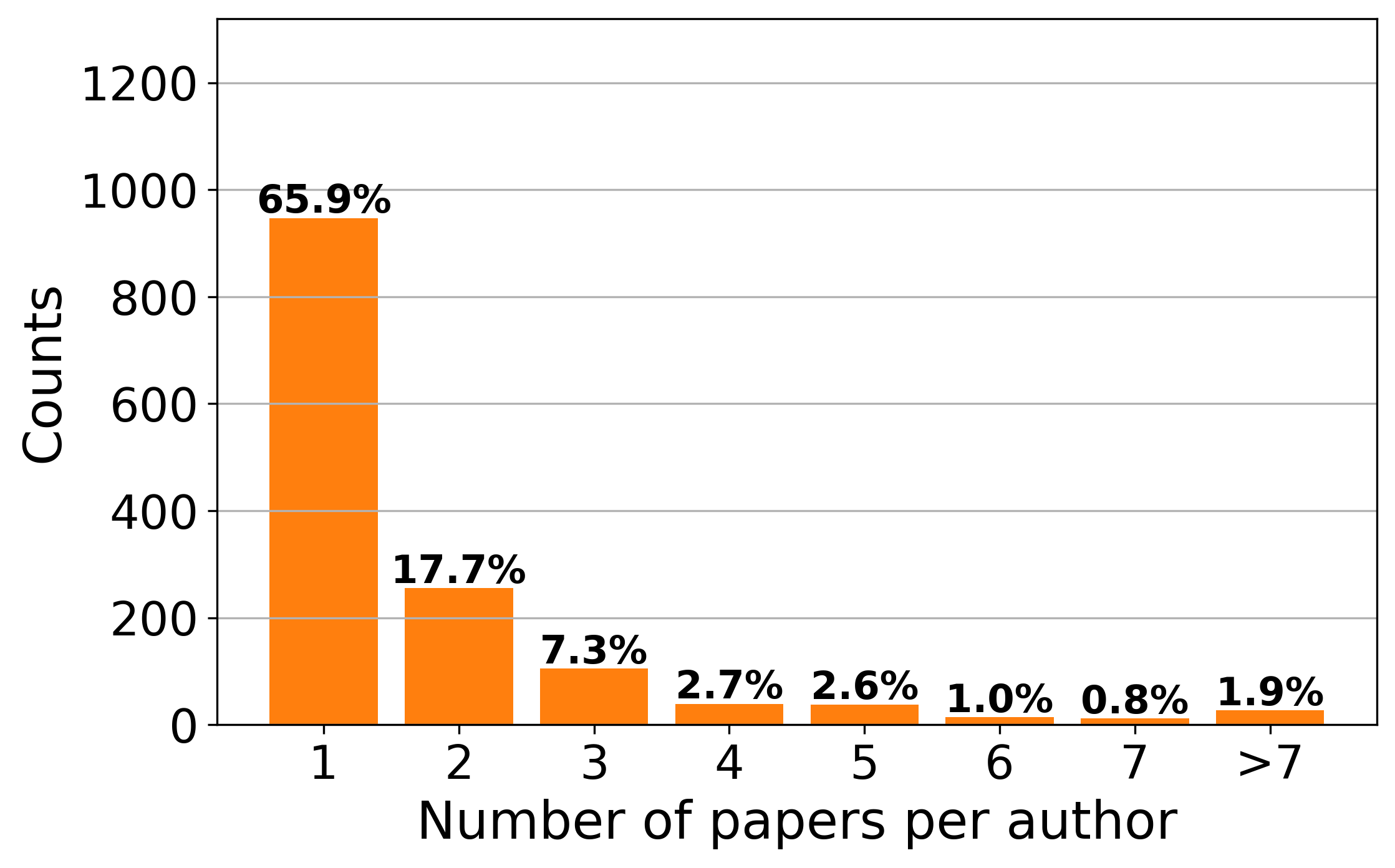}
        \caption{2021}
    \end{subfigure}
    \hfill
    \begin{subfigure}[b]{0.32\textwidth}
        \includegraphics[width=\textwidth]{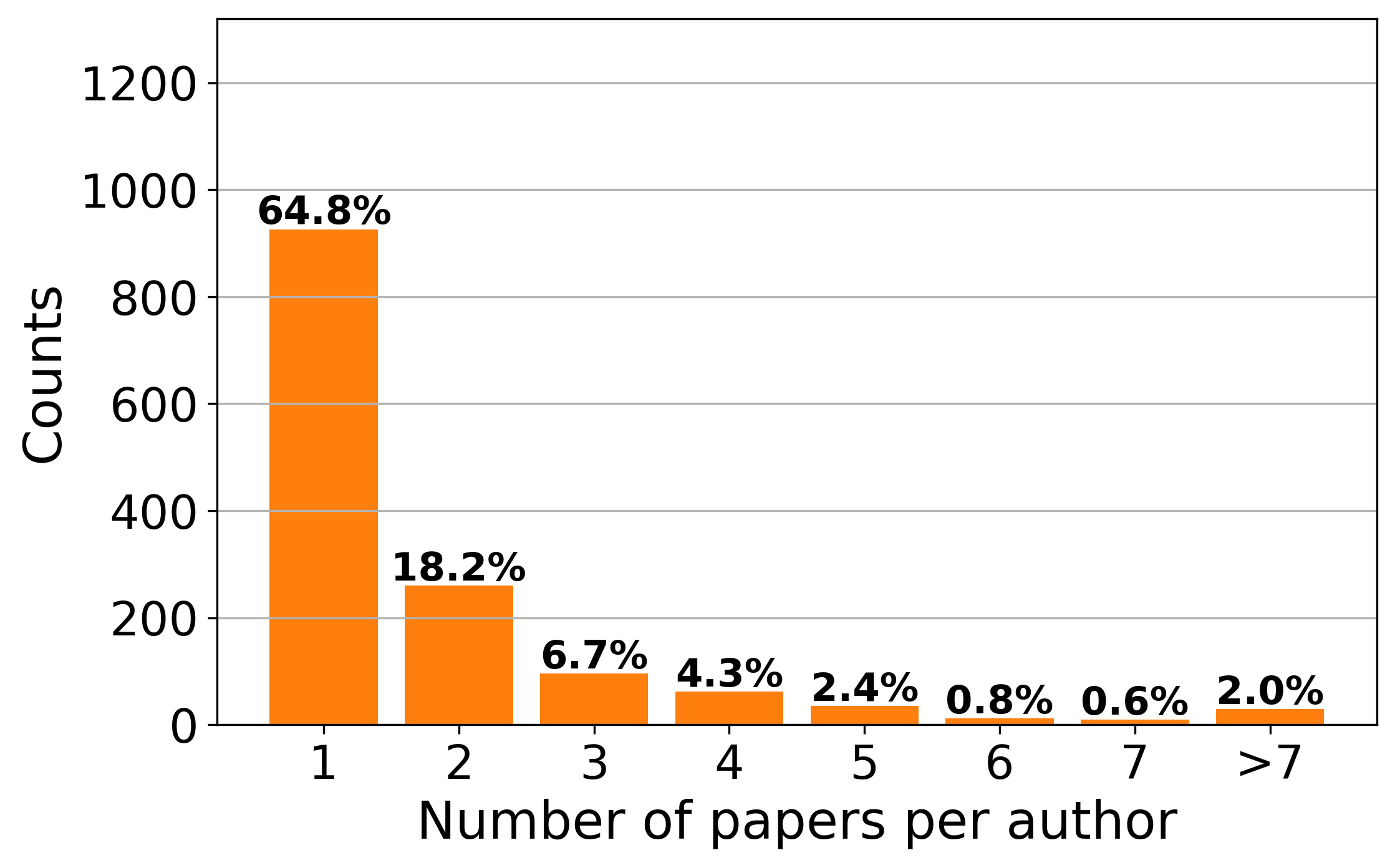}
        \caption{2022}
    \end{subfigure}
    \hfill
    \begin{subfigure}[b]{0.32\textwidth}
        \includegraphics[width=\textwidth]{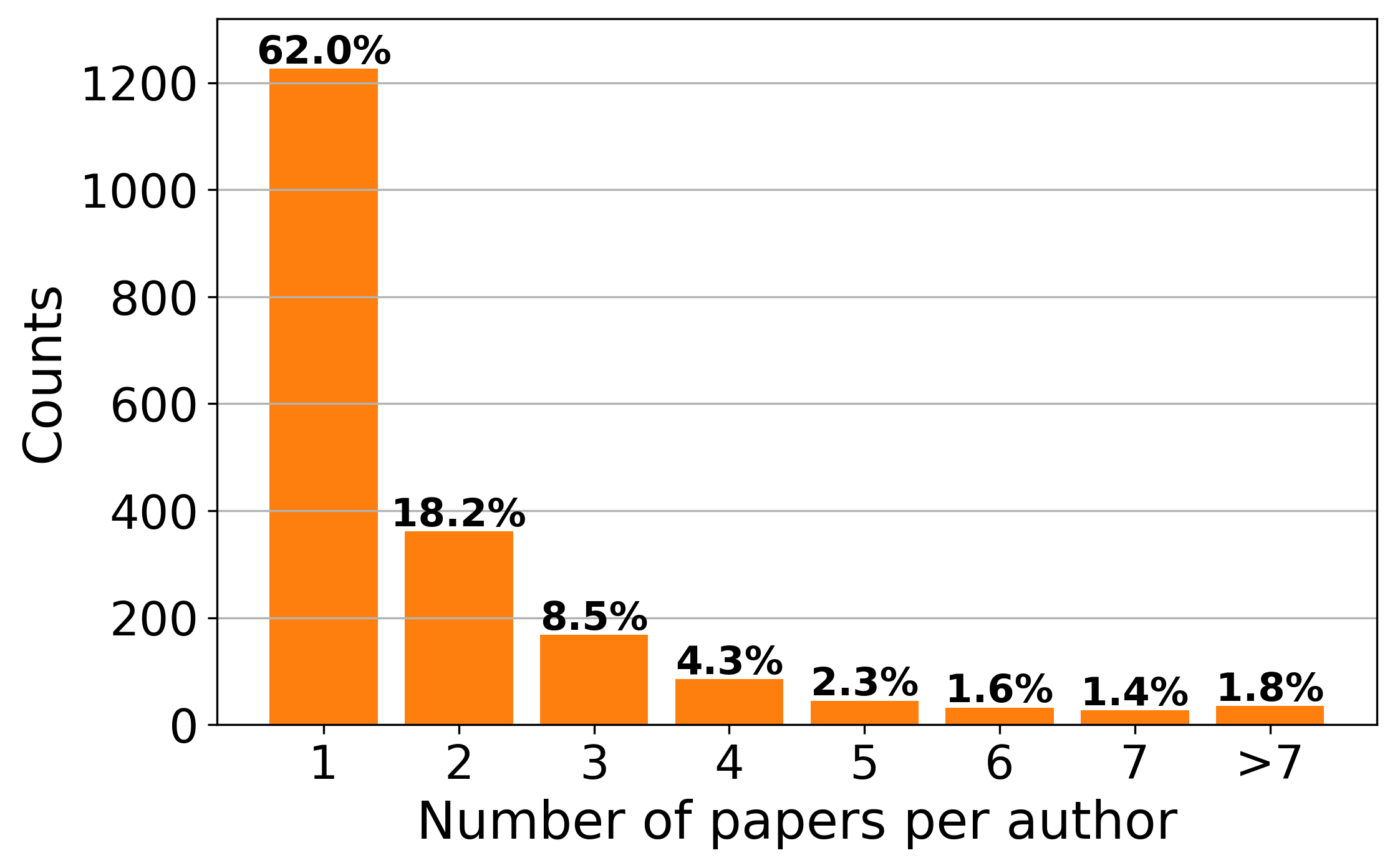}
        \caption{2023}
    \end{subfigure}

    \caption{Empirical distributions of the number of papers each author has for ICLR 2021-2023. Papers with multiple authors are attributed to the author with the largest number of papers.}\label{fig:empirical-distribution}
\end{figure}

\subsubsection{Results}

\Cref{fig:real-burden} illustrates the relative review burden (as defined in \cref{subsubsec:gaussian-results}) of the sequential review mechanism. Recall that the smaller the relative review burden is, the more reviews can be saved by implementing the sequential review mechanism while ensuring an equivalent conference utility to that of the parallel review mechanism. Our results indicate that over 20\% of the review burden can be saved under this real-data estimated model. Moreover, the relative review burden decreases from 2021 to 2023 as more and more authors have more than one paper which enlarges the effect of the sequential review mechanism. 

\begin{figure}[htbp]
    \centering

    \begin{subfigure}[b]{0.5\textwidth}
        \includegraphics[width=\textwidth]{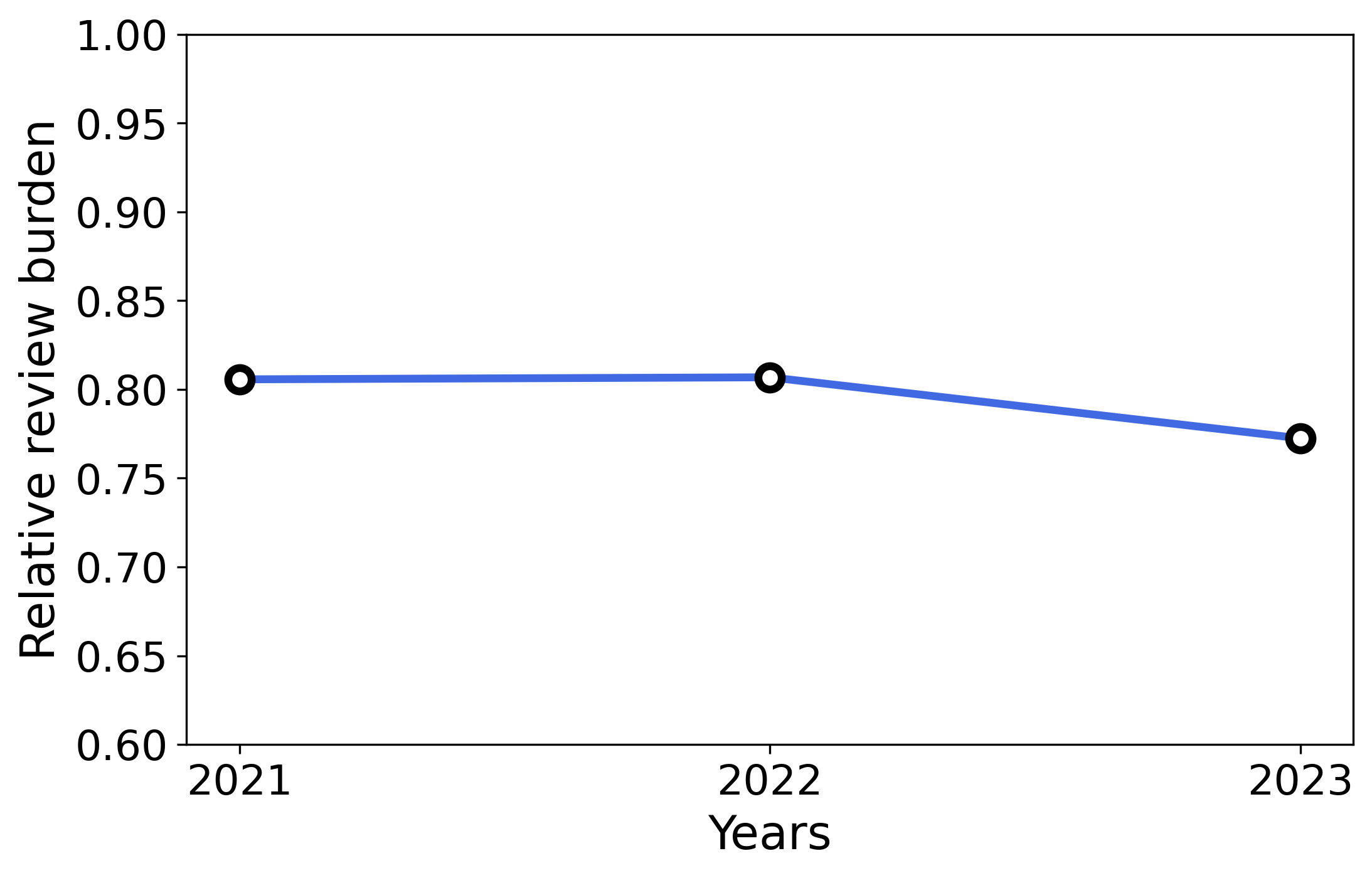}
    \end{subfigure}

    \caption{Relative review burden of the threshold sequential review mechanism for ICLR 2021-2023.}\label{fig:real-burden}
\end{figure}

\section{Endogenous Paper Quality}
\label{sec:endogeneous}
This section considers the setting where authors have the choice of the quality of papers they write. Papers of higher quality have a higher probability of being accepted and bring higher rewards to the author if accepted. However, producing a high-quality paper usually requires greater effort and time from the author. The question of interest is: can we incentivize authors to exert more effort on improving the quality of a smaller set of papers instead of producing more papers each with a lower quality? 

We examine the power of the sequential review mechanism in addressing this problem. Using an economic concept called the \emph{marginal rate of substitution} (MRS), we show that authors always have a stronger willingness to write more high-quality papers and less low-quality papers under the sequential review mechanism compared with the parallel review mechanism.

\subsection{Binary Effort}
As a toy example, we first consider a binary effort setting. Suppose the author can either exert a high effort to write a high-quality paper with an acceptance probability $p_h$ or exert a low effort to write a lower-quality paper with an acceptance probability $p_l<p_h$. Furthermore, the acceptance of a low-quality paper gives the author a reward $u_l$ while the acceptance of a high-quality paper brings a reward $u_h\ge u_l$. Let $U^\mathcal{M}_a(n_h, n_l)$ be the expected reward of writing $n_h$ high-quality papers and $n_l$ low-quality papers under the true ranking $\pi^*$ and mechanism $\mathcal{M}$. We first introduce a key concept in this section.

\begin{definition}
    Given a review mechanism, the marginal rate of substitution between a high-quality paper and a low-quality paper ($MRS_{h,l}$) is defined as the ratio between the expected reward gain of writing one more high-quality paper and the expected reward gain of writing one more low-quality paper. Mathematically, $MRS_{h,l}(n_h, n_l)=\frac{U_a(n_h+1, n_l) - U_a(n_h, n_l)}{U_a(n_h, n_l+1) - U_a(n_h, n_l)}$ where $U_a$ is the expected reward of the author.
\end{definition}

$MRS_{h,l}$ measures the number of low-quality papers that an author is willing to give up for an additional high-quality paper. A higher MRS implies that an additional high-quality paper values more than an additional low-quality paper under the corresponding mechanism. 

\begin{lemma}\label{lem:mrs-binary}
    The naive sequential review mechanism has a weakly higher $MRS_{h,l}$ than the parallel review mechanism, i.e.~$MRS^s_{h,l}\ge MRS^p_{h,l}>1$ for any $n_h, n_l\in \mathbb{N}_0$. Moreover, the statement is strict if $n_l\ge 1$.
\end{lemma}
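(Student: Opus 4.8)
\textbf{Proof proposal for \cref{lem:mrs-binary}.}

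The plan is to derive a closed-form (or near closed-form) expression for $U^{\mathcal{M}}_a(n_h, n_l)$ under both the naive sequential mechanism and the parallel mechanism, and then compare the two $MRS_{h,l}$ ratios directly. Under the parallel review mechanism every paper is reviewed, so by additivity $U^p_a(n_h, n_l) = n_h\, p_h u_h + n_l\, p_l u_l$; hence $U^p_a(n_h+1,n_l) - U^p_a(n_h,n_l) = p_h u_h$ and $U^p_a(n_h,n_l+1) - U^p_a(n_h,n_l) = p_l u_l$, giving $MRS^p_{h,l} = \frac{p_h u_h}{p_l u_l}$, which exceeds $1$ since $p_h > p_l$ and $u_h \ge u_l$. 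So the parallel side is immediate; the substance is the sequential side.

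For the naive sequential mechanism, first I would invoke \cref{thm:sufficient_cond} (the naive mechanism is truthful, as noted in the text) to justify that under the true ranking $\pi^*$ the author lists all $n_h$ high-quality papers first, then all $n_l$ low-quality papers. Because a paper is reviewed iff its predecessor was accepted, the reviewed/accepted prefix is governed by a simple success-run process: paper in position $i$ is reviewed iff all of positions $1,\dots,i-1$ were accepted. Thus $\Pr(\text{position } i \text{ accepted}) = \prod_{j<i} p_{(j)} \cdot p_{(i)}$ where $p_{(j)}$ is $p_h$ for $j \le n_h$ and $p_l$ otherwise. Summing the rewards, I would write
\begin{equation*}
U^s_a(n_h, n_l) = u_h \sum_{i=1}^{n_h} p_h^{\,i} \;+\; u_l\, p_h^{\,n_h} \sum_{i=1}^{n_l} p_l^{\,i}.
\end{equation*}
From here the two marginal differences are explicit geometric-type expressions: adding a high-quality paper inserts it at position $n_h+1$ (pushing the low-quality block back by one), and adding a low-quality paper appends it at the end. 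Concretely I expect
\begin{equation*}
U^s_a(n_h+1,n_l) - U^s_a(n_h,n_l) = p_h^{\,n_h+1}\Big( u_h + (p_h - 1)\, u_l \sum_{i=1}^{n_l} p_l^{\,i}\Big),
\qquad
U^s_a(n_h,n_l+1) - U^s_a(n_h,n_l) = u_l\, p_h^{\,n_h} p_l^{\,n_l+1}.
\end{equation*}
Taking the ratio, the common factor $p_h^{\,n_h}$ cancels, and I would be left with $MRS^s_{h,l} = \frac{p_h\big(u_h + (p_h-1)u_l \sum_{i=1}^{n_l} p_l^i\big)}{u_l\, p_l^{\,n_l+1}}$.

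The final step is the comparison $MRS^s_{h,l} \ge MRS^p_{h,l}$. Note $MRS^p_{h,l} = \frac{p_h u_h}{p_l u_l}$, so it suffices to show $\frac{u_h + (p_h-1)u_l \sum_{i=1}^{n_l} p_l^i}{p_l^{\,n_l}} \ge \frac{u_h}{1}$, i.e. that the sequential numerator ``discounted'' by $p_l^{n_l}$ beats $u_h$; rearranging, this reduces to showing $u_h(1 - p_l^{n_l}) \ge (1-p_h)\, u_l \sum_{i=1}^{n_l} p_l^i$, which (using $1 - p_l^{n_l} = (1-p_l)\sum_{i=0}^{n_l-1} p_l^i \ge (1-p_l)\sum_{i=1}^{n_l} p_l^{i}$ and $u_h \ge u_l$, $1-p_h \le 1-p_l$) follows term by term. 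When $n_l = 0$ both sides of the lemma collapse to $MRS^s_{h,l} = MRS^p_{h,l} = \frac{p_h u_h}{p_l u_l}$ (the sum is empty), which is why equality is allowed; when $n_l \ge 1$ the inequality $1-p_h < 1-p_l$ (strict, as $p_l < p_h$) makes the comparison strict. I expect the main obstacle to be purely bookkeeping: getting the insertion-position argument exactly right (a new high-quality paper is inserted \emph{within} the high block at position $n_h+1$, not appended, so it multiplicatively rescales the entire low-quality tail by $p_h$), and then massaging the geometric sums so the $p_h^{n_h}$ factors cancel cleanly and the remaining inequality visibly reduces to $u_h \ge u_l$ together with $p_l < p_h$. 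No deep idea is needed beyond the truthfulness guarantee of \cref{thm:sufficient_cond}; the rest is careful algebra.
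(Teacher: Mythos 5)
Your approach is the same as the paper's: write the closed forms $U^p_a(n_h,n_l)=n_h p_h u_h+n_l p_l u_l$ and $U^s_a(n_h,n_l)=u_h\sum_{i=1}^{n_h}p_h^i+u_l p_h^{n_h}\sum_{i=1}^{n_l}p_l^i$, compute the two marginal gains, take the ratio, and reduce to an elementary inequality via $p_h>p_l$ and $u_h\ge u_l$; the paper does exactly this. The one issue is an algebra slip in your high-quality marginal gain: the correct difference is $u_h p_h^{n_h+1}-u_l p_h^{n_h}(1-p_h)\sum_{i=1}^{n_l}p_l^i=p_h^{n_h}\bigl(p_h u_h-(1-p_h)u_l\sum_{i=1}^{n_l}p_l^i\bigr)$, whereas you factored out $p_h^{n_h+1}$ from both terms, introducing a spurious extra factor of $p_h$ on the tail term. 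This inflates your $MRS^s_{h,l}$ (you subtract too little in the numerator), so the inequality you reduce to, $u_h(1-p_l^{n_l})\ge(1-p_h)u_l\sum_{i=1}^{n_l}p_l^i$, is not the one the lemma actually needs; the correct target is $p_h u_h(1-p_l^{n_l})\ge(1-p_h)u_l\sum_{i=1}^{n_l}p_l^i$. Fortunately the corrected target still holds by the same mechanism: since $1-p_l^{n_l}=\frac{1-p_l}{p_l}\sum_{i=1}^{n_l}p_l^i$, it is equivalent (for $n_l\ge 1$) to $\frac{p_h(1-p_l)}{p_l(1-p_h)}\ge\frac{u_l}{u_h}$, whose left side exceeds $1$ and right side is at most $1$, giving strictness exactly when $n_l\ge 1$ (this is the paper's $\eta(n_l)\ge 1$ step). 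So the conclusion is unaffected, but as written your displayed $MRS^s_{h,l}$ and final inequality need the one-line repair above.
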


The proof is deferred to \cref{app:mrs-binary}. Intuitively, the lemma follows because the sequential review mechanism applies an exponential discount on the utilities contributed by the bottom-ranked, low-quality papers. As a result, the ratio between the marginal utility of writing a high-quality paper compared to a low-quality paper is greater under the sequential review mechanism than under the parallel review mechanism. We further note (without formal proof) that this intuition can be straightforwardly generalized to any truthful sequential review mechanism, as long as the review policy $\ProbRev$ is monotone in paper quality.

Now, we show that a higher $MRS_{h,l}$ implies a stronger preference of writing more high-quality papers over low-quality papers. Let $U^s_a(n_h, n_l)$ and $U^p_a(n_h, n_l)$ be the expected utility of writing $n_h$ high-quality papers and $n_l$ low-quality papers under the naive sequential review mechanism and the parallel review mechanism respectively.

\begin{theorem}\label{thm:qual_vs_quant}
    For any $n_h, n_l, n'_h, n'_l\in \mathbb{N}_0$ such that $n'_h>n_h$ and $n'_l>n_l$, if $U_a^p(n'_h, n_l)\ge U_a^p(n_h, n'_l)$, $U_a^s(n'_h, n_l)\ge U_a^s(n_h, n'_l)$. Furthermore, there exist settings where $U_a^s(n'_h, n_l)\ge U_a^s(n_h, n'_l)$, but $U_a^p(n'_h, n_l)< U_a^p(n_h, n'_l)$.
\end{theorem}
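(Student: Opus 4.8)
The plan is to derive the first implication directly from the monotonicity of $MRS_{h,l}$ established in \cref{lem:mrs-binary}, together with a discrete path-following (telescoping) argument. Since $U_a^p(n_h', n_l) \ge U_a^p(n_h, n_l')$ and $n_h' > n_h$, $n_l' > n_l$, the idea is to move from the point $(n_h, n_l')$ to the point $(n_h', n_l)$ along an arbitrary monotone lattice path that increases the high-quality count step by step and decreases the low-quality count step by step. Along any such path, the net change of $U_a^s$ can be written as a sum of single-coordinate increments $U_a^s(\cdot+e_h) - U_a^s(\cdot)$ (positive contributions, one for each of the $n_h' - n_h$ upward steps in $n_h$) minus increments $U_a^s(\cdot+e_l) - U_a^s(\cdot)$ (the $n_l' - n_l$ downward steps in $n_l$), and similarly for $U_a^p$. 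The hypothesis says the $p$-version of this signed sum is nonnegative; I want to conclude the $s$-version is as well.

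The key structural fact to extract is that the \emph{marginal utility of a low-quality paper is identical under both mechanisms} — under the naive sequential mechanism, a newly added low-quality paper always goes to the bottom of the ranking, so its marginal contribution to $U_a$ is (acceptance probability of reaching and accepting that last slot) which, for the parallel mechanism where every paper is reviewed unconditionally, equals $p_l \cdot u_l$ in both cases up to the discount factor. More carefully: under the parallel mechanism $U_a^p(n_h,n_l) = n_h p_h u_h + n_l p_l u_l$, so $MRS^p_{h,l} = \frac{p_h u_h}{p_l u_l}$ is constant; and under the naive sequential mechanism the marginal gain of a low-quality paper shrinks as $n_l$ grows (exponential discount), while the marginal gain of a high-quality paper is less discounted since high papers sit on top. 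So $MRS^s_{h,l}(n_h, n_l)$ is (weakly) increasing in $n_h$ and in $n_l$, and always $\ge MRS^p_{h,l}$. The plan is to exploit exactly this monotonicity: rewrite both inequalities in the normalized form $\sum (\text{high increments}) \ge \sum (\text{low increments})$, and argue that replacing each ratio (high-increment)/(low-increment) by its larger sequential counterpart preserves the inequality — here I would choose the lattice path that first does all the low-quality decrements (so each is evaluated at high $n_h$ and high $n_l$, i.e.\ where the sequential discount on low papers is largest relative to parallel) and then all the high-quality increments, making the comparison term-by-term cleanest.

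For the strict-separation clause ("there exist settings where $U_a^s(n_h', n_l) \ge U_a^s(n_h, n_l')$ but $U_a^p(n_h', n_l) < U_a^p(n_h, n_l')$"), the plan is to exhibit an explicit small example: take $n_h = 0, n_h' = 1, n_l = 0, n_l' = 2$ (or similar), and choose $p_h, p_l, u_h, u_l$ so that $p_h u_h$ is slightly below $2 p_l u_l$ (so parallel strictly prefers two low papers) but close enough that the exponential discount under the sequential mechanism on the second low-quality paper — it is reviewed only if the first is accepted, contributing $\approx p_l^2 u_l$ rather than $p_l u_l$ — tips the balance the other way. I would just plug in concrete numbers (e.g.\ $p_l = 1/2$, $p_h = 1$, $u_h = u_l = 1$: parallel gives $2 p_l u_l = 1 = p_h u_h$, a tie, so perturb $u_h$ down slightly to break it strictly in favor of parallel-prefers-low while sequential, with $U_a^s(0,2) = p_l u_l + p_l^2 u_l = 3/4 < 1 \approx U_a^s(1,0)$, strictly prefers high) and verify the two inequalities by direct computation using the closed forms for $U_a^p$ and $U_a^s$ in the binary-effort model.

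The main obstacle I anticipate is making the telescoping/path argument fully rigorous when $n_h' - n_h$ and $n_l' - n_l$ differ and the lattice path is not a single monotone staircase — one must be careful that the per-step $MRS$ comparisons can be aggregated without sign errors, since we are comparing a \emph{sum} of high-increments against a \emph{sum} of low-increments rather than a single ratio. The cleanest route is probably to avoid $MRS$ ratios entirely at the aggregation stage and instead prove the two pointwise inequalities $U_a^s(n_h+1, n_l) - U_a^s(n_h, n_l) \ge U_a^p(n_h+1, n_l) - U_a^p(n_h, n_l)$ and $U_a^s(n_h, n_l+1) - U_a^s(n_h, n_l) \le U_a^p(n_h, n_l+1) - U_a^p(n_h, n_l)$ directly from the closed forms, then sum the first over the high-steps and the second over the low-steps of a fixed path; this reduces \cref{thm:qual_vs_quant} to \cref{lem:mrs-binary}-style bookkeeping plus elementary algebra.
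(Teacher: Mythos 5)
Your overall strategy (telescope along a lattice path from $(n_h,n'_l)$ to $(n'_h,n_l)$ and invoke the MRS comparison of \cref{lem:mrs-binary}) is the same one the paper uses, and your counterexample for the second clause is essentially the paper's ($p_h=1$, $p_l=1/2$; the paper takes $n'_h=1$, $n'_l=3$ to get a strict gap without perturbing $u_h$). However, the step where you actually close the argument has a genuine gap. The ``cleanest route'' you settle on rests on the pointwise inequality $U_a^s(n_h+1,n_l)-U_a^s(n_h,n_l)\ge U_a^p(n_h+1,n_l)-U_a^p(n_h,n_l)$, and this is false. From the closed forms, the sequential marginal of an extra high-quality paper is $p_h^{n_h}\bigl(p_hu_h-\frac{1-p_h}{1-p_l}p_l(1-p_l^{n_l})u_l\bigr)$, whereas the parallel marginal is $p_hu_h$; already at $n_h=1$, $n_l=0$ the former is $p_h^2u_h<p_hu_h$. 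The sequential mechanism discounts the marginal value of high-quality papers too (by $p_h^{n_h}$, plus a penalty for pushing the low-quality papers one slot further down); it only discounts them \emph{less than} it discounts low-quality papers, which is exactly why only the ratio comparison $MRS^s\ge MRS^p$ is available, not a coordinatewise comparison of increments across mechanisms.

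Your first route (replace each per-step ratio by its larger sequential counterpart) is the right idea but is left vague precisely where the work is: when $K_h=n'_h-n_h\ne K_l=n'_l-n_l$ the high-increments and low-increments do not pair up, so you are comparing a sum of $K_h$ terms against a sum of $K_l$ terms and a term-by-term substitution does not parse. The paper resolves this by computing the increments explicitly along your preferred path: the $k$th high-increment is $p_h^{k-1}$ times the first and the $k$th low-increment is $p_l^{k-1}$ times the first (\cref{lem:lambda}), so the hypothesis becomes $MRS^p\ge K_l/K_h$ and the goal becomes $MRS^s\ge \bigl(\sum_{k=1}^{K_l}p_l^{k-1}\bigr)/\bigl(\sum_{k=1}^{K_h}p_h^{k-1}\bigr)$; given $MRS^s\ge MRS^p$ it then suffices to show $\frac{1}{K_h}\sum_{k=1}^{K_h}p_h^{k-1}\ge\frac{1}{K_l}\sum_{k=1}^{K_l}p_l^{k-1}$, which is trivial when $K_h\ge K_l$ (since $MRS^s>1$ while the right-hand ratio is $<1$) and otherwise follows because averaging a decreasing geometric sequence over fewer initial terms only increases it. You would need to supply this computation (or an equivalent one) to make your aggregation step sound; as written, neither of your two proposed aggregation arguments goes through. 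A minor additional slip: $MRS^s_{h,l}(n_h,n_l)$ is in fact independent of $n_h$ (the factor $\eta$ depends only on $n_l$), not increasing in it, though nothing in your argument actually uses that claim.
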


In words, \cref{thm:qual_vs_quant} shows that whenever the author wants to write $k_h=n'_h-n_h$ more high-quality papers compared with writing $k_l=n'_l-n_l$ more low-quality papers under the parallel review mechanism, she is always willing to do so under the sequential review mechanism. Furthermore, based on the proof, it is easy to see that the condition is not necessary. That is, there exist some cases where the author wants to write $k_l$ more low-quality papers rather than writing $k_h$ more high-quality papers under the parallel review mechanism, but she is willing to choose writing more high-quality papers under the sequential review mechanism.

\subsection{Finite Effort}

The intuitions and results from the binary effort setting can be straightforwardly generalized to a finite effort setting. Now, suppose the author has $m$ distinct effort levels to choose from. Suppose without loss of generality that these choices of effort result in $m$ distinct acceptance probabilities such that $p_1>p_2>\cdots > p_m$. Let $U_a(\bm{n})$ be the expected reward of writing $n_i$ papers with acceptance probability $p_i$ for every $i\in [m]$ where $\bm{n} = (n_1, n_2,\ldots, n_m)$. We first generalize the definition of the marginal rate of substitution.

\begin{definition}
    Given a review mechanism, the marginal rate of substitution (MRS) between a paper with acceptance probability $p_i$ and a paper with acceptance probability $p_j$ in the finite setting is defined as $MRS_{i,j}(\bm{n})=\frac{U_a(\bm{n}') - U_a(\bm{n})}{U_a(\bm{n}'') - U_a(\bm{n})}$ where $\bm{n}' = (n_1,\ldots, n_{i-1}, n_i+1, \ldots, n_m)$ and $\bm{n}'' = (n_1,\ldots, n_{j-1}, n_j+1, \ldots, n_m)$.
\end{definition}

Intuitively, $MSR_{i,j}$ is the ratio between the gain of the reward of writing one more paper with acceptance probability $p_i$ and the gain of the reward of writing one more paper with acceptance probability $p_j$. Again, a higher $MSR_{i,j}$ implies that the author is willing to give up more papers with acceptance probability $p_j$ to write one more paper with probability $p_i$. The following lemma generalizes \cref{lem:mrs-binary} to the finite effort setting. The proof is left in \cref{app:finite_effort}.

\begin{lemma}\label{lem:mrs-finite}
    The naive sequential review mechanism has a weakly higher $MRS_{i,j}(\bm{n})$ than the parallel review mechanism for any $\bm{n}\in \mathbb{N}_0^{|\bm{n}|}$ and $i< j$. Moreover, the statement is strict if $\sum_{k = i+1}^m n_k > 0$.
\end{lemma}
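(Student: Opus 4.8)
\textbf{Proof Proposal for \cref{lem:mrs-finite}.}

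The plan is to mirror the structure of the binary-effort argument from \cref{lem:mrs-binary}, generalizing its accounting of the marginal reward gains. First I would write down closed-form expressions for $U_a^s(\bm{n})$ and $U_a^p(\bm{n})$. Under the parallel mechanism, every paper is reviewed independently, so $U_a^p(\bm{n}) = \sum_{k=1}^m n_k \, p_k \, u_k$, where $u_k$ is the reward of a paper with acceptance probability $p_k$. Under the naive sequential mechanism, since the author reports the true ranking and (with i.i.d.\ noise) papers with higher acceptance probability are reviewed first, the papers are effectively reviewed in the order $p_1, \dots, p_1, p_2, \dots, p_2, \dots, p_m, \dots, p_m$, and a paper in position $\ell$ of this order is reviewed only if all previous papers were accepted. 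Thus the expected reward of the $\ell$th paper is (its acceptance probability) $\times$ (its reward) $\times \prod_{\text{earlier } \ell'} p_{(\ell')}$. This gives a clean product/telescoping form for $U_a^s(\bm n)$ grouped by effort class.

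Next I would compute the two marginal differences. Adding one more paper of class $i$ inserts it at the \emph{end} of the class-$i$ block (after all existing class-$i$ papers, before the class-$(i+1)$ block); adding one more paper of class $j$ with $j > i$ inserts it analogously further down. The key observation is that in the sequential mechanism, the marginal reward gain of inserting a class-$i$ paper is $p_i u_i$ multiplied by the probability that all papers ranked ahead of the insertion point are accepted, i.e.\ $p_i u_i \prod_{k \le i} p_k^{n_k}$; inserting a class-$j$ paper gains $p_j u_j \prod_{k<j} p_k^{n_k} \cdot p_j^{n_j}$ times a correction for the class-$i$-through-$(j-1)$ papers that precede it. Taking the ratio $MRS_{i,j}^s(\bm n)$, the common prefix factor $\prod_{k<i} p_k^{n_k}$ cancels, and what remains is the parallel-mechanism ratio $\frac{p_i u_i}{p_j u_j}$ multiplied by an extra factor that is a product of probabilities $p_k \le 1$ for $k$ in the range $i, \dots, j$ appearing in the \emph{denominator}'s chain but not the numerator's. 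Since each such factor is $\le 1$, this extra factor is $\ge 1$, yielding $MRS_{i,j}^s(\bm n) \ge MRS_{i,j}^p(\bm n) = \frac{p_i u_i}{p_j u_j}$. Strictness follows exactly when at least one of those extra probability factors is strictly less than $1$ and actually appears, which happens precisely when $\sum_{k=i+1}^m n_k > 0$ (so that there is at least one paper strictly between the two insertion points whose acceptance is required, contributing a factor $p_k < 1$).

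The main obstacle I anticipate is bookkeeping precision: carefully identifying the exact set of multiplicative factors that survive in numerator versus denominator of the MRS ratio, and confirming that the cancellation leaves exactly the parallel ratio times a product of $p_k$'s that is $\le 1$ — with the strictness condition matching $\sum_{k=i+1}^m n_k > 0$ rather than some off-by-one variant. A secondary subtlety worth a sentence is justifying that, under i.i.d.\ review noise, the truthful ranking indeed orders papers by decreasing acceptance probability in the relevant sense, so that the sequential block structure above is valid; this is inherited from the reduction in the proof of \cref{thm:sufficient_cond} and the monotonicity of $\ProbAcc$. I would also handle the degenerate cases ($n_i = 0$, or classes with $n_k = 0$ collapsing out of the product) explicitly, since the insertion-point argument still goes through but the "existing class-$i$ papers" factor becomes empty.
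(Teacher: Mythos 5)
There is a genuine gap in your computation of the sequential mechanism's marginal gains. You take the marginal reward of inserting one more class-$i$ paper to be $p_i u_i \prod_{k \le i} p_k^{n_k}$, i.e.\ the new paper's own expected reward. But under the naive sequential mechanism the inserted paper also imposes a negative externality on every lower-ranked paper: each of them now needs one additional acceptance before it gets reviewed, so the expected reward of the entire downstream tail is multiplied by $p_i$. The correct marginal gain (the paper's Eq.~\eqref{eq:marginal_i}) is
\begin{equation*}
U_a^s(\bm{n}') - U_a^s(\bm{n}) \;=\; \Gamma_i\, p_i u_i \;-\; (1-p_i)\, Z_{i+1},
\qquad
Z_{i+1} = \sum_{l=i+1}^m \Gamma_{l-1} S_l u_l ,
\end{equation*}
and the second term cannot be dropped — you can already see it in the binary case, Eq.~\eqref{eq:margianl_h}, where the marginal gain of a high-quality paper is $p_h^{n_h}\bigl(p_h u_h - \tfrac{1-p_h}{1-p_l}p_l(1-p_l^{n_l})u_l\bigr)$, not $p_h^{n_h+1}u_h$. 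With the correct formula the MRS ratio is no longer "parallel ratio times a product of $p_k$'s"; establishing $\eta(\bm{n}) \ge 1$ requires the nontrivial bound $\tfrac{1-p_i}{p_i u_i} Z_k \le \Gamma_{k-1} - \Gamma_m$, which in turn uses both $p_i > p_j$ \emph{and} $u_i \ge u_j$. The reward ordering never enters your argument, which is a signal something is missing.

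A concrete symptom of the error is your strictness condition. Your "extra factor" is $\Gamma_i/\Gamma_j = \prod_{k=i+1}^{j} p_k^{-n_k}$, so your argument would give strictness only when $\sum_{k=i+1}^{j} n_k > 0$, and would predict \emph{equality} when the only additional papers sit in classes below $j$ (e.g.\ $n_{i+1}=\dots=n_j=0$ but $n_{j+1}>0$). The lemma asserts strict inequality whenever $\sum_{k=i+1}^{m} n_k > 0$; that strictness comes precisely from the asymmetric downstream-loss terms $(1-p_i)Z_{i+1}$ versus $(1-p_j)Z_{j+1}$ that your accounting omits. The high-level plan (closed forms, marginal differences, ratio comparison) matches the paper's, but the central algebraic step needs to be redone with the externality included.
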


With the help of \cref{lem:mrs-finite}, we generalize the main result, i.e.~\cref{thm:qual_vs_quant}, from the binary effort setting to the finite effort setting. We defer the proof to \cref{app:finite_effort}.

\begin{theorem}\label{thm:qual_vs_quant_finite}
    For any $1\le i<j\le m$, let $n_i, n_j, n'_i, n'_j\in \mathbb{N}_0$ such that $n'_i>n_i$ and $n'_j>n_j$. Then, if $U_a^p(n'_i, n_j)\ge U_a^p(n_i, n'_j)$, $U_a^s(n'_i, n_j)\ge U_a^s(n_i, n'_j)$. Furthermore, there exist settings where $U_a^s(n'_i, n_j)\ge U_a^s(n_i, n'_j)$, but $U_a^p(n'_i, n_j)< U_a^p(n_i, n'_j)$.
\end{theorem}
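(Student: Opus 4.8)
The plan is to mirror the argument behind \cref{thm:qual_vs_quant}, using \cref{lem:mrs-finite} in place of \cref{lem:mrs-binary}, and to dispose of the ``furthermore'' clause by embedding the binary example. First I would reduce to a two‑type problem: since the statement leaves every coordinate $n_k$ with $k\notin\{i,j\}$ untouched, I fix those and view $U_a^p$ and $U_a^s$ as functions of $(x,y)$, where $x$ counts type‑$i$ papers and $y$ counts type‑$j$ papers. Write $k_i=n_i'-n_i\ge 1$, $k_j=n_j'-n_j\ge 1$. Since the parallel utility is additive, the hypothesis is exactly $k_i p_i u_i\ge k_j p_j u_j$. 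For the naive sequential mechanism I would use that, under $\pi^\ast$, papers are listed by decreasing quality, so each paper contributes its reward times the product of the acceptance probabilities of all papers ranked weakly above it. Two structural facts come out of this: (i) the papers ranked below type $i$ contribute, per unit of survival probability, a total at most $\frac{u_i p_i}{1-p_i}$ --- this follows by a downward induction over types from $u_1\ge\cdots\ge u_m$ and $p_1>\cdots>p_m$, and in particular shows every marginal gain below is non‑negative; and (ii) using (i), the change in $U_a^s$ from $A=(n_i,n_j')$ to $B=(n_i',n_j)$ equals a positive configuration constant times
\[
(1-p_i^{k_i})\,S\;-\;D\,(1-p_j^{k_j})\,p_j^{n_j}\,R ,
\]
where $R=\frac{u_j p_j}{1-p_j}-V_{>j}\ge 0$, $S=\frac{u_i p_i}{1-p_i}-V_{>i}\ge 0$ with $V_{>i},V_{>j}$ the per‑unit‑survival values of the papers below types $i,j$ at the base configuration, and $D=\prod_{i<h<j}p_h^{n_h}\in(0,1]$ collects the fixed intermediate types.

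The core is then the scalar inequality $(1-p_i^{k_i})S\ge D(1-p_j^{k_j})p_j^{n_j}R$. I would first prove the ``slack bound'' $S\ge D p_j^{n_j}R$: the difference $S-D p_j^{n_j}R$ simplifies to $\frac{u_i p_i}{1-p_i}-\big(D_1+D\frac{u_j p_j}{1-p_j}\big)$, where $D_1$ is the value of the intermediate types, and the parenthesized quantity is the limiting value of $V_{>i}$ as $y\to\infty$, hence at most $\frac{u_i p_i}{1-p_i}$ by fact (i). Writing $S=D p_j^{n_j}R+G$ with $G\ge 0$, the target becomes $(1-p_i^{k_i})G\ge (p_i^{k_i}-p_j^{k_j})D p_j^{n_j}R$. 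If $p_i^{k_i}\le p_j^{k_j}$ the right side is non‑positive and we are done. Otherwise I would bound $D p_j^{n_j}R\le\frac{u_j p_j}{1-p_j}$, and (in the consecutive case $j=i+1$, where $D=1$, $D_1=0$) unwind $G=\frac{u_ip_i}{1-p_i}-\frac{u_jp_j}{1-p_j}$ to reduce everything to
\[
g_0(p_i,k_i)\,u_i p_i\;\ge\;g_0(p_j,k_j)\,u_j p_j ,\qquad g_0(p,k):=1+p+\cdots+p^{k-1}.
\]
Here is where the parallel hypothesis $k_i p_i u_i\ge k_j p_j u_j$ finally gets used: factoring $g_0(p,k)=k\,\psi(p,k)$ with $\psi(p,k)=\frac{1-w}{k(1-w^{1/k})}$ and $w=p^k$, it suffices that $\psi(p_i,k_i)\ge\psi(p_j,k_j)$, and one checks $\psi$ is increasing in $p$ and decreasing in $k$; this covers the remaining regime $p_i^{k_i}>p_j^{k_j}$ with $k_i\le k_j$, while $k_i>k_j$ is immediate since then $g_0(p_i,k_i)>g_0(p_j,k_j)$ and $u_ip_i>u_jp_j$. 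Propagating non‑negativity of the marginal gains through the coupling of acceptances (exactly as in \cref{thm:sufficient_cond}/\cref{thm:qual_vs_quant}) converts this scalar inequality into $U_a^s(B)\ge U_a^s(A)$.

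For the ``furthermore'' part I would set $n_k=0$ for every $k\notin\{i,j\}$; the model collapses to the binary‑effort setting with high type $i$ and low type $j$ (the inequalities $p_i>p_j$ and $u_i\ge u_j$ hold because $i<j$), and \cref{thm:qual_vs_quant} already exhibits a setting in which $U_a^s(n_i',n_j)\ge U_a^s(n_i,n_j')$ while $U_a^p(n_i',n_j)<U_a^p(n_i,n_j')$.

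The hard part will be the bookkeeping for non‑consecutive $i<j$: the factor $D$ and the intermediate value $D_1$ share the counts $n_h$ ($i<h<j$), and bounding either in isolation is too lossy to reach the $g_0$‑inequality. The cleanest remedy I foresee is to induct on $j-i$, reducing to the consecutive case $j=i+1$ by routing the substitution through type‑$(i+1)$ papers and applying \cref{lem:mrs-finite} at each step; the fallback is to keep $D$ explicit and carry the uniform bound $D_1+D\frac{u_j p_j}{1-p_j}\le\frac{u_i p_i}{1-p_i}$ alongside the parallel hypothesis throughout the case analysis. The only other delicate point, the monotonicity of $\psi$, is a routine one‑variable calculus check.
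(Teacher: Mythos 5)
Your skeleton is right — telescoping the utility difference through geometric marginal gains, reducing to the averaging inequality $\frac{1}{k_i}\sum_{m=0}^{k_i-1}p_i^m \ge \frac{1}{k_j}\sum_{m=0}^{k_j-1}p_j^m$, and obtaining the ``furthermore'' clause by zeroing out all other types and reusing the binary counterexample. Your decomposition $U_a^s(n_i',n_j)-U_a^s(n_i,n_j')=\Gamma_i\bigl[(1-p_i^{k_i})S - Dp_j^{n_j}(1-p_j^{k_j})R\bigr]$ is also correct, and your argument genuinely closes the consecutive case $j=i+1$. But the gap you flag for non-consecutive $i<j$ is real, and neither proposed remedy repairs it. Remedy (b) is provably too lossy: under the isolated constraints $D\le 1$, $D_1\ge 0$, $D_1+D\frac{p_ju_j}{1-p_j}\le\frac{p_iu_i}{1-p_i}$ alone, the extremal point $D=1$, $D_1=\frac{p_iu_i}{1-p_i}-\frac{p_ju_j}{1-p_j}$ makes the right-hand side of the reduced inequality exceed the left-hand side by $(p_i^{k_i}-p_j^{k_j})\frac{p_ju_j}{1-p_j}>0$; that point happens to be infeasible (intermediate papers force $D<1$ whenever $D_1>0$), but this means you must exploit the joint structure of $D$ and $D_1$, not uniform bounds. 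Remedy (a), the induction on $j-i$, founders on integrality: the hypothesis $k_ip_iu_i\ge k_jp_ju_j$ need not admit an integer $k_{i+1}$ with $k_ip_iu_i\ge k_{i+1}p_{i+1}u_{i+1}\ge k_jp_ju_j$, so the chain of intermediate substitutions is not well-defined.

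The fix is to not bound $S$ and $R$ separately at all. In your notation, \cref{lem:mrs-finite} is precisely the statement $\frac{(1-p_i)S}{(1-p_j)Dp_j^{n_j}R}=MRS^s_{i,j}\ge MRS^p_{i,j}=\frac{p_iu_i}{p_ju_j}$, with all of the $D$, $D_1$ bookkeeping for arbitrary $i<j$ already absorbed into its proof. Combining this single ratio inequality with the hypothesis $\frac{p_iu_i}{p_ju_j}\ge\frac{k_j}{k_i}$ and your own averaging inequality $\frac{g_0(p_i,k_i)}{k_i}\ge\frac{g_0(p_j,k_j)}{k_j}$ gives
\begin{equation*}
\frac{(1-p_i^{k_i})S}{Dp_j^{n_j}(1-p_j^{k_j})R}=\frac{g_0(p_i,k_i)}{g_0(p_j,k_j)}\cdot MRS^s_{i,j}\ \ge\ \frac{g_0(p_i,k_i)}{g_0(p_j,k_j)}\cdot\frac{k_j}{k_i}\ \ge\ 1,
\end{equation*}
which is your target scalar inequality in full generality, with no case split on $p_i^{k_i}$ versus $p_j^{k_j}$ and no separate treatment of the intermediate types. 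This is exactly the paper's route: it proves the geometric-discount identity $\lambda_i(k)=p_i^{k-1}$ (\cref{lem:lambda_finite}, the finite analogue of \cref{lem:lambda}) and then ports the proof of \cref{thm:qual_vs_quant} verbatim, with \cref{lem:mrs-finite} supplying the step where your direct estimates break down.
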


\section{Limitations, Discussions, and Future Work}
\label{sec:discussion}

Here, we discuss the limitations of our analysis and how to possibly implement the proposed method in practice.  However, the implementation details can likely be further improved by future work.  

\paragraph{Coauthorship}
The first limitation of our model is the assumption that every paper has only one author. 
However, in practice, coauthorship is an inevitable issue for the implementation of our method. A straightforward solution is to assign each paper to one of its authors and only solicit the ranking information from that author. 

Various strategies can be employed for the paper-to-author assignment. For example, we can assign each paper only to its first author, driven by the notion that the first author may possess the most accurate insight into the paper's quality. However, this assignment weakens the power of the sequential review mechanism, as many first authors have fewer than three submissions to their name. Another approach is to assign each paper to the author with the largest number of submissions, as discussed in \cref{subsec:setup}. In a recent work \cite{wu2023isotonic} that generalizes the isotonic mechanism to the multi-author setting, it is shown that this greedy assignment is truthful and has appealing robust approximation guarantees for the isotonic mechanism.

\paragraph{The trade-off between review burden and delay}

Another practical concern is the delay in the review process. As presented, the mechanism requires $n$ phases where $n$ is the maximum number of papers owned by any single author. Within phase $i$, the $i$'th ranked paper of every author (if any) is simultaneously reviewed and an acceptance/rejection decision is promptly reached. Then, phase $i + 1$ can only start after the reviews of phase $i$ are in. 
Intuitively, this implementation significantly delays the review process but minimizes the review burden. 
At the other extreme, the conference could only have one review phase, where all the papers are simultaneously reviewed. Then, for each author, the acceptance/rejection decision is made in sequence based on the sequential review mechanism. Note that any paper that should not have been reviewed according to the mechanism could be rejected, its reviews simply ignored.  Here the review burden is not decreased at all (though, as we saw, the conference quality may still be increased).

Intuitively, varying the number of review phases can trade off the review burden and the delay in the review process. 
In reality, perhaps a two-phase implementation of the sequential review mechanism can achieve a desirable trade-off. For example, some conferences such as AAAI and EC are already implementing a two-phase review mechanism: all papers are assigned with two reviews in the first phase and only papers with at least one good review will enter the second phase where two more reviews are assigned. We can integrate the sequential review mechanism with this framework. In the first phase, the top $\max(3,\lfloor n_i/2\rfloor)$ ranked papers of each author with $n_i$ papers are assigned with two reviews, and the remaining papers are assigned with one review (so that no paper is rejected without reviewing). Then, any paper with two negative reviews and those papers that are ranked lower than them by authors are rejected with no further review. The surviving papers enter the second phase, wherein they are assigned additional reviews to reach a total of four reviews each, and a sequential review mechanism is implemented. The advantage of using the sequential review mechanism to assist peer review is that the author's information can be leveraged to prioritize the reviewing of high-quality papers.

\paragraph{Broader Applications}
Although our paper primarily focuses on the conference peer review problem, our insights can be potentially applied to address general principal-agent problems with decisions that rely on noisy evaluations. For example, consider content-recommendation platforms where content producers are entrusted with the ability to highlight their videos or products for priority recommendations. While the original intent is to leverage producer signals to promote high-quality content, strategic producers may game this mechanism by prioritizing utility-maximizing content, such as advertisements, which may not necessarily align with user preferences. Our proposed sequential review mechanism offers a solution to this incentive issue by soliciting content rankings from producers and penalizing lower-ranked items with reduced user recommendations, particularly if the top-ranked items receive unfavorable user feedback.
Additional applications suitable for our method include employee recruitment and second-hand product trading markets, where principals can benefit from agents' ranking information.

\section{Conclusion and Future Work}

In the setting of (conference) peer review, we study the problem of how to elicit honest information from authors, who themselves are interested in the outcome. Our main contribution is a framework for designing mechanisms capable of eliciting quality rankings from authors with multiple submissions. Compared with the previous isotonic mechanism, our mechanism works within a more realistic utility model for peer review and addresses a key incentive issue that plagued the previous method. We further investigate the advantages of our mechanism from the aspects of reducing reviewing workload, improving the average quality of the reviewed papers, and incentivizing authors to focus more on the quality of papers rather than the quantity.

The noisy conference peer review has been the source of frequent complaints and concerns for authors over the years. We believe that properly utilizing author information has the potential to improve conference decision-making and mitigate the dilemma of peer review. 
Nonetheless, our work leaves room for further exploration. The largest space of this is in optimally integrating the sequential review mechanism with a practical multi-phase review mechanism. In particular, in order to optimally trade off the conference utility, review burden, and some definition of fairness, how many review phases should the conference have? Furthermore, within each review phase, how many papers from each author should be reviewed and how many reviews should be solicited for each paper?

\section{Acknowledgements}
The authors would like to thank all the anonymous reviewers whose insightful comments and constructive feedback have contributed to the improvement of this work. This work is supported by the National Science Foundation under Grants \#2007256 and \#2313137.

\bibliographystyle{plainnat}
\bibliography{main}
\newpage
\appendix

\section{The Isotonic Mechanism}
\label{app:isotonic}

For self-containedness, we briefly introduce the isotonic mechanism and why it is not truthful when authors' utility is not convex of the review score. 

Consider the model in \cref{sec:model}. The isotonic mechanism takes the input of a sequence of review scores $\bm{r}$ and a ranking $\pi$ reported by the author, and outputs the modified review scores $\hat{\bm{r}}$ by solving the following convex program:

\begin{align*}
    \min_{\hat{\bm{r}}} \quad &\frac{1}{2}||\hat{\bm{r}} - \bm{r}||^2\\
    s.t.\quad &\hat{r}_{\pi(1)}\ge \hat{r}_{\pi(2)} \ge \cdots \ge \hat{r}_{\pi(n)}.
\end{align*}
That is, the isotonic mechanism projects the raw review scores onto the feasible region $\{\bm{r}: \hat{r}_{\pi(1)}\ge \hat{r}_{\pi(2)} \ge \cdots \ge \hat{r}_{\pi(n)}\}$ and forms the new review scores $\hat{\bm{r}}$. 

Now, we provide an example of why the isotonic mechanism experiences incentive issues when the convex utility assumption is violated.

\begin{example}
    Suppose an author has three papers $a, b$ and $c$, whose qualities are $q_a = 2, q_b = -1, q_c = -1$. The author aims to maximize the expected number of accepted papers. Suppose the review score is rather accurate such that $r_i = q_i$ for $i\in [a,b,c]$, and the conference has an acceptance threshold at $0$. Therefore, if the author reports truthfully, the modified review score $\hat{r}_i$ agrees with the true quality, resulting in the rejection of both paper $b$ and paper $c$. However, if the author ranks paper $a$ in third place, resulting in scores $\hat{r}_a = \hat{r}_b = \hat{r}_c =0$, which means all three papers will be accepted.   This is part of a more general phenomenon where a high paper can push up the ratings of lower papers that are ranked ahead of it, and extends to the case where review scores are noisy.  Furthermore, note that in this example the conference is even better utilizing the parallel review mechanism, resulting a utility of $2$.
\end{example}

\section{Proofs in \cref{sec:seq_mechanism}}

\subsection{Truthfulness of the Memoryless Coin-flip Mechanism}
\label{app:CF_proof}

\begin{proof}[Proof of \cref{thm:cf-truth}]
    By \cref{thm:sufficient_cond}, it is sufficient to show that $\ProbRev^{cf}$ and $\bm{\mu}^{cf}$ are monotone. It is easy to see that $\ProbRev^{cf}$ is monotone based on the construction. Specifically, $\ProbRev^{cf}((\alpha, \gamma)) = 1 > \ProbRev^{cf}(\omega) = 0$ for any $(\alpha, \gamma)\neq \omega$. 

    Next, we show that $\bm{\mu}^{cf}$ is monotone. This includes three steps. For simplicity, we omit the superscript within this proof.
    \begin{enumerate}
        \item We first show that $\bm{\mu}$ is monotone in score. Consider any review state $(\alpha_i, \gamma_i)\neq \omega$ and any pair of review scores $r'\ge r$. Because any state other than $\omega$ has the same order, it is sufficient to show that $\mu(\omega|(\alpha_i,\gamma_i), r')\le \mu(\omega|(\alpha_i,\gamma_i), r)$. This holds because when $\ProbAcc$ is monotone and $\rho$ is increasing, $\mu(\omega|(\alpha_i,\gamma_i), r') = (1-\ProbAcc(r')) (1-\rho(r'))\le (1-\ProbAcc(r)) ((1-\rho(r)) = \mu(\omega|(\alpha_i,\gamma_i), r)$.
        \item Then, we show that $\bm{\mu}$ is monotone in state. Fix a review score $r$ and consider any pair of states $(\alpha',\gamma')\succeq (\alpha,\gamma)$.  We want to show that $\mu(\omega|(\alpha',\gamma'), r)\le \mu(\omega|(\alpha,\gamma), r)$. There are two cases. First, if $(\alpha,\gamma)=\omega$, the statement trivially holds. Second, if $(\alpha',\gamma')\sim (\alpha,\gamma)\succ \omega$, $\mu((\alpha',\gamma'), r) = \mu((\alpha,\gamma), r)$ because the review process is memoryless. In any case, conditioned on the same score, a better state leads to a dominating state distribution in the next round.
        \item Finally, we show that $\bm{\mu}$ is monotone of ordering. Let $\tilde{\mu}_\omega(r_1, r_2|\phi)$ be the probability of reaching the termination state $\omega$ in round $i+2$, given the review state to be $\phi$ in round $i$ and review scores to be $r_1$ and $r_2$ in round $i$ and $i+1$ respectively. Now, consider any review state $\phi\neq \omega$. The following property of the memoryless coin-flip mechanism is the key to the proof.
        \begin{lemma}\label{lem:exchange}
            For any $\phi$ and $r'\ge r$, $\tilde{\mu}^{cf}_\omega(r', r|\phi) = \tilde{\mu}^{cf}_\omega(r, r'|\phi)$.
        \end{lemma}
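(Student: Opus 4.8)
The plan is to exploit the \emph{memoryless} structure of $\bm{\mu}^{cf}$. The first step is to observe that for every review state $\phi\in\{(1,1),(1,0),(0,1)\}$ (i.e.~every $\phi\neq\omega$), the distribution $\mu^{cf}(\phi,r)$ is the \emph{same}, and in particular the probability of landing in the termination state is
\[
f(r)\;:=\;\Pr\bigl[\mu^{cf}(\phi,r)=\omega\bigr]\;=\;\bigl(1-\ProbAcc(r)\bigr)\bigl(1-\rho(r)\bigr),
\]
which depends only on the score $r$ and not on $\phi$. This is immediate from the definition of $\mu^{cf}$, since none of the four transition probabilities reference $(\alpha_i,\gamma_i)$.

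Next I would expand $\tilde{\mu}^{cf}_\omega(r_1,r_2\mid\phi)$ by conditioning on the state reached after the first transition. Because $\phi\neq\omega$, that state equals $\omega$ with probability $f(r_1)$, in which case it remains $\omega$ forever (as $\mu^{cf}(\omega,\cdot)=\omega$ with probability $1$); otherwise it is one of $(1,1),(1,0),(0,1)$, each of which transitions to $\omega$ under score $r_2$ with the same probability $f(r_2)$. Therefore
\[
\tilde{\mu}^{cf}_\omega(r_1,r_2\mid\phi)\;=\;f(r_1)+\bigl(1-f(r_1)\bigr)f(r_2)\;=\;1-\bigl(1-f(r_1)\bigr)\bigl(1-f(r_2)\bigr).
\]
The right-hand side is manifestly symmetric under the swap $r_1\leftrightarrow r_2$, so in particular $\tilde{\mu}^{cf}_\omega(r',r\mid\phi)=\tilde{\mu}^{cf}_\omega(r,r'\mid\phi)$ for any $\phi\neq\omega$. (The argument in fact establishes the identity for \emph{all} pairs of scores, not just $r'\ge r$, but the weaker statement is all that is needed downstream.)

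There is essentially no obstacle here: the content of the lemma is just that ``terminating within two rounds'' is the union of two independent Bernoulli events whose success probabilities $f(r_1)$ and $f(r_2)$ depend only on the respective round's score, and such a compound event is symmetric in the two rounds. The one point worth stating carefully is \emph{why} $f$ is state-independent, namely the memorylessness of $\ProbRev^{cf}$ and $\mu^{cf}$ --- without it, the particular non-$\omega$ state reached after round $i$ could influence the round $i+1$ transition and symmetry could break; this is exactly the feature that distinguishes the coin-flip mechanism (which satisfies the swap identity with \emph{equality}) from the credit pool mechanism (which only gets stochastic dominance).
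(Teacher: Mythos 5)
Your proof is correct and takes essentially the same route as the paper's: both define the per-round termination probability $A(r)=(1-\ProbAcc(r))(1-\rho(r))$ (your $f$), note it is state-independent by memorylessness, and expand the two-round termination probability as $A(r_1)+(1-A(r_1))A(r_2)$, which is symmetric in the two scores. The only (trivial) omission is the case $\phi=\omega$, where both sides are the point mass at $\omega$; the paper dispatches this in one line.
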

       The lemma suggests that the state distribution after flipping the ordering of two adjacent review scores remains the same under the memoryless coin-flip mechanism
        We defer the proof of the lemma and present the proof of the theorem first, which has two parts.
        \begin{enumerate}
            \item By \cref{lem:exchange}, $\tilde{\mu}_\omega(r', r|\phi) = \tilde{\mu}_\omega(r, r'|\phi)$. 
            \item Consider four review scores $r_1\le r_2\le r_3\le r_4$. Let $X\sim \tilde{\mu}(r_4, r_1|\phi)$, $Y\sim \tilde{\mu}(r_2, r_3|\phi)$, $X'\sim\tilde{\mu}(r_1, r_4|\phi)$ and $Y'\sim\tilde{\mu}(r_3, r_2|\phi)$. Let $\bar{Z} = \max(X, Y)$, $\underbar{Z} = \min(X, Y)$, $\bar{Z}' = \max(X', Y')$, $\underbar{Z}' = \min(X', Y')$. By \cref{lem:exchange}, $\bar{Z}$ and $\bar{Z}'$ have the same distribution, as do $\underbar{Z}$ and $\underbar{Z}'$. Therefore, $(\bar{Z}, \underbar{Z})$ has the same distribution as $(\bar{Z}', \underbar{Z}')$.
        \end{enumerate}
    \end{enumerate}
\end{proof}

Now, we prove the Lemma.
    \begin{proof}[Proof of \cref{lem:exchange}]
    First note that if $\phi = \omega$, the lemma trivially holds.
    Let $A(r) = (1-\ProbAcc(r)) (1-\rho(r))$ be the probability of termination in the round where the state is not $\omega$ and the paper has a review score $r$.
    \begin{align*}
        \tilde{\mu}^{cf}_\omega(r', r|\phi) &= A(r') + (1-A(r'))A(r)\\
        &= A(r) + (1-A(r))A(r')\\
        & = \tilde{\mu}^{cf}_\omega(r, r'|\phi).
    \end{align*}
    \end{proof}

\subsection{Truthfulness of the Credit Pool Mechanism}
\label{app:CP_proof}

\begin{proof}[Proof of \cref{thm:cp_truth}]
By \cref{thm:sufficient_cond}, it is sufficient to show that $\ProbRev^{cp}$ and $\bm{\mu}^{cp}$ are monotone. 
Under the definition of the ordering of states, it is clear that the review policy is monotone. Because for any $B'\ge B$, it follows that whenever $\ProbRev^{cp}(B)=1$, $\ProbRev^{cp}(B')=1$.

Now, we show that the transition probability mapping is monotone. For simplicity, we omit the superscript.
\begin{enumerate}
    \item We first show that $\bm{\mu}$ is monotone in score. This is true because for any $B_i\ge 0$ and any $r'\ge r$, $B'_{i+1} = B_{i} + \beta(r')\ge B_{i} + \beta(r) = B_{i+1}$. Furthermore, if $B_i<0$, $B'_{i+1} = B_{i+1} = \omega$. In both cases, a better review score leads to a (weakly) preferred state.
    \item Then, we show that $\bm{\mu}$ is monotone in state. Fixing a review score $r$, a better review state in round $i$ indicates a better review state in round $i+1$ because 1) if $B'_{i}\ge B_{i}\ge 0$, we have $B'_{i+1} = B'_{i} + \beta(r)\ge B_{i} + \beta(r) = B_{i+1}$; 2) if $B_{i} < 0$, it follows that $B'_{i+1}\succeq B_{i+1} = B_{i}$ for any $B'_{i+1}$.
    \item Finally, we show that $\bm{\mu}$ is monotone in ordering, which has the following two steps.
    \begin{enumerate}
        \item First, we show that if two review scores in adjacent rounds are swapped, it is always better to put the higher score ahead. Consider any state $B\ge$ and any pair of review scores $r'\ge r$. Because the transition mapping is deterministic, if $B+\beta(r)<0$, we have $\tilde{\mu}(r, r'|B) =\omega$ with probability 1, and thus is trivially dominated by $\tilde{\mu}(r', r|B)$. If $B+\beta(r)>0$, $\tilde{\mu}(r', r|B) = \tilde{\mu}(r, r'|B) = B + \beta(r') + \beta(r)$ with probability 1. In any case, ranking the higher review score ahead results in a better review state.
        \item Suppose $r_1\le r_2\le r_3\le r_4$ and $r_1+r_4 = r_2+r_3$. For any review state $B_i$ in round $i$, we want to reason about the review states after two rounds under different pairs of review scores. Let $\bar{Z} = \max(\tilde{\mu}(r_3,r_2|B), \tilde{\mu}(r_1,r_4|B))$, $\underbar{Z} = \min(\tilde{\mu}(r_3,r_2|B), \tilde{\mu}(r_1,r_4|B))$, $\bar{Z}' = \max(\tilde{\mu}(r_4,r_1|B), \tilde{\mu}(r_2,r_3|B))$ and $\underbar{Z}' = \min(\tilde{\mu}(r_4,r_1|B), \tilde{\mu}(r_2,r_3|B))$. We want to show that $\bar{Z}'\ge \bar{Z}$ and $\underbar{Z}'\ge \underbar{Z}$. There are four cases. 
        \begin{enumerate}
            \item $B+\beta(r_1)<0$ and $B+\beta(r_2)\ge 0$. In this case, because $\beta$ is convex, $\bar{Z}' = B+\beta(r_1)+\beta(r_4) \ge B+\beta(r_2)+\beta(r_3) = \bar{Z}$ and $\underbar{Z}' = B+\beta(r_2)+\beta(r_3) > \omega = \underbar{Z}$.
            \item $B+\beta(r_2)<0$ and $B+\beta(r_3)\ge 0$. In this case, because $\beta$ is convex, $\bar{Z}' = B+\beta(r_1)+\beta(r_4) \ge B+\beta(r_2)+\beta(r_3) = \bar{Z}$ and $\underbar{Z}' = \omega = \underbar{Z}$.
            \item $B+\beta(r_3)<0$ and $B+\beta(r_4)\ge 0$. In this case, $\bar{Z}' = B+\beta(r_1)+\beta(r_4) \ge \omega = \bar{Z}$ and $\underbar{Z}' = \omega = \underbar{Z}$.
            \item $B+\beta(r_4)<0$. In this case, all the review processes terminate in round $i$, satisfying the monotonicity in ordering.
        \end{enumerate}
    \end{enumerate}
\end{enumerate}
\end{proof}

\section{Proofs in Section \ref{sec:endogeneous}}

\subsection{Marginal Rate of Substitution in the Binary Effort Setting}
\label{app:mrs-binary}

\begin{proof}[Proofs in \cref{lem:mrs-binary}]
    We first write down the expected utility and derive the MRS of the parallel review mechanism. 
    \begin{align*}
        U_a^p(n_h, n_l) &= u_h\sum_{k\in [n_h]} k\cdot\tbinom{n_h}{k} p_h^{k}(1-p_h)^{n_h-k} + u_l\sum_{k\in [n_l]} k\cdot\tbinom{n_l}{k} p_l^{k}(1-p_l)^{n_l-k}.
    \end{align*}
    Because under the parallel review mechanism, every paper is independently reviewed, the utility gain of generating one more paper it exactly equal to the expected utility of having that paper reviewed, i.e.~the product of acceptance probability and the reward of acceptance.
    Therefore,
    \begin{align*}
      MRS^p(n_h, n_l) = \frac{U_a^p(n_h+1, n_l) - U_a^p(n_h, n_l)}{U_a^p(n_h, n_l+1) - U_a^p(n_h, n_l)} = \frac{p_h u_h}{p_l u_l}>1.
    \end{align*}

    For the naive sequential review mechanism, note that the author will always rank high-quality papers before low-quality papers under $\pi^*$. Therefore,
    \begin{align*}
        U_a^s(n_h, n_l) &= u_h\sum_{k_h\in [n_h]} p_h^{k_h} + u_l\cdot p_h^{n_h}\sum_{k_l\in [n_l]} p_l^{k_l}
    \end{align*}
    The marginal return of writing a high-quality paper and a low-quality, respectively, can be written as
    \begin{align}
        U_a^s(n_h+1, n_l) - U_a^s(n_h, n_l) &= p_h^{n_h} u_h - p_h^{n_h}(1-p_h)u_l \sum_{k_l\in [n_l]} p_l^{k_l} = p_h^{n_h} \left(p_hu_h-\frac{1-p_h}{1-p_l}p_l(1-p_l^{n_l})u_l\right) \label{eq:margianl_h}\\
        U_a^s(n_h, n_l+1) - U_a^s(n_h, n_l) &= p_h^{n_h}p_l^{n_l+1}u_l. \label{eq:margianl_l}
    \end{align}
    Therefore,
    \begin{align*}
      MRS^s(n_h, n_l) &= \frac{p_hu_h-\frac{1-p_h}{1-p_l}p_l(1-p_l^{n_l})u_l}{p_l^{n_l+1}u_l} \\
                      &= \frac{p_h u_h}{p_l u_l}\cdot\frac{1-\frac{(1-p_h)p_l}{(1-p_l)p_h}\cdot\frac{u_l}{u_h}\cdot(1-p_l^{n_l})}{p_l^{n_l}}\\
                      &= MRS^p(n_h, n_l) \cdot \frac{1-\frac{(1-p_h)p_l}{(1-p_l)p_h}\frac{u_l}{u_h}\cdot(1-p_l^{n_l})}{p_l^{n_l}}\coloneqq MRS^p(n_h, n_l) \cdot \eta(n_l).
    \end{align*}

    We want to show $\eta(n_l)\ge 1$ for any $n_l\ge 0$. First note that because $0\le p_l<p_h\le 1$, $\frac{(1-p_h)p_l}{(1-p_l)p_h}< 1$. Furthermore, by assumption, $\frac{u_l}{u_h}\le 1$. Therefore, $\eta(n_l)\ge \frac{1-(1-p_l^{n_l})}{p_l^{n_l}} = 1$, and the inequality is strict when $n_l\ge 1$. This completes the proof.
\end{proof}

\subsection{Proof of Proposition \ref{thm:qual_vs_quant}}

\begin{proof}
    The proof follows by rewriting the expected reward $U_a(n'_h, n_l)$ using $U_a(n_h, n_l)$ and $U_a(n_h+1, n_l)$, and rewriting $U_a(n_h, n'_l)$ using $U_a(n_h, n_l)$ and $U_a(n_h, n_l+1)$. Then, we can prove the theorem using the property of the marginal rate of substitution shown in \cref{lem:mrs-binary}.

    \textbf{Parallel review mechanism. }
    We first rewrite the expected reward for the parallel review mechanism. Note that under the parallel review mechanism, the marginal reward of writing one paper depends only on the quality of that paper. That is, $U_a^p(n_h+1, n_l) - U_a^p(n_h, n_l) = p_hu_h$ for any $n_h, n_l$. Therefore
    \begin{align*}
        U_a^p(n'_h, n_l) &= U_a^p(n'_h, n_l) - U_a^p(n'_h-1, n_l) +  U_a^p(n'_h-1, n_l) - \cdots - U_a^p(n_h, n_l) + U_a^p(n_h, n_l)\\
        &= (n'_h - n_h)\cdot(U_a^p(n_h+1, n_l) - U_a^p(n_h, n_l)) + U_a^p(n_h, n_l).\\
    \intertext{Similarly, we have}
        U_a^p(n_h, n'_l) &= (n'_l - n_l)\cdot(U_a^p(n_h, n_l+1) - U_a^p(n_h, n_l)) + U_a^p(n_h, n_l).
    \end{align*}
    Therefore, the author's preference $U_a^p(n'_h, n_l)\ge U_a^p(n_h, n'_l)$ implies that 
    \begin{align}
        &\qquad U_a^p(n'_h, n_l) - U_a^p(n_h, n'_l) \ge 0\notag\\
        \Leftrightarrow&\qquad (n'_h - n_h)\cdot(U_a^p(n_h+1, n_l) - U_a^p(n_h, n_l)) - (n'_l - n_l)\cdot(U_a^p(n_h, n_l+1) - U_a^p(n_h, n_l)) \ge 0\notag\\
        \Leftrightarrow&\qquad MRS^p_{h,l}(n_h,n_l)\ge \frac{n'_l - n_l}{n'_h - n_h}\label{eq:pal_cond}
    \end{align}

    \textbf{Sequential review mechanism. }
    The analysis for the sequential review mechanism is a little more complicated. For simplicity, let $\lambda_h(k) = \frac{U_a^s(n_h+k, n_l) - U_a^s(n_h+k-1, n_l)}{U_a^s(n_h+1, n_l) - U_a^s(n_h, n_l)}$ for any $k\ge 1$. Analogously, let $\lambda_l(k) = \frac{U_a^s(n_h, n_l+k) - U_a^s(n_h, n_l+k-1)}{U_a^s(n_h, n_l+1) - U_a^s(n_h, n_l)}$. Note that both $\lambda_h(k)$ and $\lambda_l(k)$ are positive for any $k$. Using the same recipe as the analysis of the parallel review mechanism,
    \begin{align*}
        U_a^s(n'_h, n_l) &= U_a^s(n'_h, n_l) - U_a^s(n'_h-1, n_l) + U_a^s(n'_h-1, n_l) - \cdots - U_a^s(n_h, n_l) + U_a^s(n_h, n_l)\\
        &= (1 + \lambda_h(2)+\cdots+ \lambda_h(n'_h-n_h))\cdot(U_a^s(n_h+1, n_l) - U_a^s(n_h, n_l)) + U_a^s(n_h, n_l).\\
    \intertext{Similarly,}
    U_a^s(n_h, n'_l) &= (1 + \lambda_l(2)+\cdots+ \lambda_l(n'_l-n_l))\cdot(U_a^s(n_h, n_l+1) - U_a^s(n_h, n_l)) + U_a^s(n_h, n_l).
    \end{align*}

     The following lemma simplifies $\lambda_h$ and $\lambda_l$.
    \begin{lemma}\label{lem:lambda}
        $\lambda_h(k) = p_h^{k-1}$ and $\lambda_l(k) = p_l^{k-1}$.
    \end{lemma}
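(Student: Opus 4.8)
The plan is to reduce the whole statement to the closed-form marginal-reward identities already established in the proof of \cref{lem:mrs-binary}, which were derived for generic arguments and may therefore be re-instantiated at shifted indices. Concretely, that proof shows
\[
U_a^s(m_h, m_l) = u_h\sum_{k\in[m_h]} p_h^{k} \;+\; u_l\, p_h^{m_h}\sum_{k\in[m_l]} p_l^{k},
\]
and, by differencing this expression, that \eqref{eq:margianl_h}--\eqref{eq:margianl_l} hold with $(m_h,m_l)$ in place of $(n_h,n_l)$: the marginal gain of one more high-quality paper equals $p_h^{m_h}\,C$, where $C \coloneqq p_h u_h - \tfrac{1-p_h}{1-p_l}p_l(1-p_l^{m_l})u_l$ depends only on $m_l$, and the marginal gain of one more low-quality paper equals $p_h^{m_h}\,p_l^{m_l+1}u_l$.

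First I would handle $\lambda_l(k)$. Instantiating the low-quality marginal-gain identity at $(m_h,m_l) = (n_h,\, n_l+k-1)$ gives numerator $U_a^s(n_h,n_l+k) - U_a^s(n_h,n_l+k-1) = p_h^{n_h} p_l^{n_l+k} u_l$, while the denominator $U_a^s(n_h,n_l+1) - U_a^s(n_h,n_l)$ is the same identity at $(n_h,n_l)$, namely $p_h^{n_h} p_l^{n_l+1} u_l$. The common factor $p_h^{n_h} p_l^{n_l+1} u_l$ cancels, leaving $\lambda_l(k) = p_l^{k-1}$.

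Then I would treat $\lambda_h(k)$ identically: instantiating the high-quality marginal-gain identity at $(m_h,m_l)=(n_h+k-1,\, n_l)$ gives numerator $p_h^{n_h+k-1}\,C$ with $C = p_h u_h - \tfrac{1-p_h}{1-p_l}p_l(1-p_l^{n_l})u_l$, and the denominator is the same identity at $(n_h,n_l)$, namely $p_h^{n_h}\,C$; here $C$ is unchanged because $n_l$ is held fixed, so both the factor $C$ and $p_h^{n_h}$ cancel, leaving $\lambda_h(k) = p_h^{k-1}$. I do not expect any real obstacle here: the only point needing care is the observation that the marginal-reward formulas from \cref{lem:mrs-binary} were proved without using any particular value of $n_h$ or $n_l$, hence are valid at the shifted arguments $n_h+k-1$ and $n_l+k-1$; once this is noted, both claims are immediate cancellations, and the ratios are well defined since, as already observed in the surrounding text, the relevant denominators are positive.
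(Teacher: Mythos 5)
Your proposal is correct and takes essentially the same route as the paper: the paper's proof also computes $\lambda_h(k)$ by instantiating the marginal-gain formula \eqref{eq:margianl_h} at $n_h+k-1$ and cancelling the common factor, and declares the $\lambda_l$ case analogous. Your version merely writes out the $\lambda_l$ computation explicitly and makes the "valid at shifted arguments" observation more careful, which is fine but not a different argument.
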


    We want to show that whenever \cref{eq:pal_cond} holds, the following inequality holds.
    \begin{align}
        &\qquad U_a^s(n'_h, n_l) - U_a^s(n_h, n'_l) \ge 0\notag\\
        \Leftrightarrow&\qquad\sum_{k=1}^{n'_h-n_h}\lambda_h(k)\cdot(U_a^s(n_h+1, n_l) - U_a^s(n_h, n_l)) - \sum_{k=1}^{n'_l-n_l}\lambda_l(k)\cdot(U_a^s(n_h, n_l+1) - U_a^s(n_h, n_l)) \ge 0\notag\\
        \Leftrightarrow&\qquad MRS^s_{h,l}(n_h,n_l)\ge \frac{\sum_{k=1}^{n'_l-n_l}\lambda_l(k)}{\sum_{k=1}^{n'_h-n_h}\lambda_h(k)}\notag\\
        \Leftrightarrow&\qquad MRS^s_{h,l}(n_h,n_l)\ge \frac{\sum_{k=1}^{n'_l-n_l}p_l^{k-1}}{\sum_{k=1}^{n'_h-n_h}p_h^{k-1}}.\label{eq:seq_cond}
    \end{align}

    First note that \cref{eq:seq_cond} trivially holds when $n'_h-n_h\ge n'_l-n_l$. This is because by \cref{lem:mrs-binary}, the left-hand-side of \cref{eq:seq_cond} is greater than 1, while because $p_h>p_l$, the right-hand-side of \cref{eq:seq_cond} is smaller than 1. Therefore, we can focus on the case where $n'_h-n_h< n'_l-n_l$.

    By \cref{lem:mrs-binary}, we know that $MRS^s_{h,l}(n_h,n_l)\ge MRS^p_{h,l}(n_h,n_l)$. Therefore, to show that \cref{eq:seq_cond} holds whenever \cref{eq:pal_cond} holds, it is sufficient to show that 
    
    \begin{align}
        \qquad\frac{n'_l - n_l}{n'_h - n_h} &\ge \frac{\sum_{k=1}^{n'_l-n_l}p_l^{k-1}}{\sum_{k=1}^{n'_h-n_h}p_h^{k-1}}\notag\\
        \Leftrightarrow\qquad \frac{\sum_{k=1}^{n'_h-n_h}p_h^{k-1}}{n'_h - n_h} &\ge \frac{\sum_{k=1}^{n'_l-n_l}p_l^{k-1}}{n'_l - n_l}.\label{eq:goal_cond}
    \intertext{\Cref{eq:goal_cond} holds because }
        \frac{\sum_{k=1}^{n'_h-n_h}p_h^{k-1}}{n'_h - n_h} &\ge \frac{\sum_{k=1}^{n'_h-n_h}p_l^{k-1}}{n'_h - n_h} \tag{because $p_h>p_l$}\\
        &\ge \frac{\sum_{k=1}^{n'_l-n_l}p_l^{k-1}}{n'_l - n_l} \tag{because $n'_h-n_h < n'_l-n_l$}.
    \end{align}
    
    This completes the proof of the first part of the theorem.

    To prove the second part of the theorem, we only have to find a counter example. Suppose $p_h = 1$, $p_l = 0.5$, and let $n_h = n_l = 0$, $n_h' = 1$ and $n_l' = 3$. In this example, $U_a^s(n_h', n_l) = 1$ and $U_a^s(n_h, n_l') = 0.5+0.5^2+0.5^3 = 0.875$, implying that $U_a^s(n_h', n_l) > U_a^s(n_h, n_l')$. However, $U_a^p(n_h', n_l) = 1$ and $U_a^p(n_h, n_l') = 1.5$, implying that $U_a^p(n_h', n_l) < U_a^s(n_h, n_l')$. This completes the proof of \cref{thm:qual_vs_quant}.
\end{proof}

Now, we prove \cref{lem:lambda}.
\begin{proof}[Proof of \cref{lem:lambda}] 
We present the proof for $\lambda_h$ while the proof for $\lambda_l$ is analogous. 
By definition
    \begin{align}
        \lambda_h(k) &= \frac{U_a^s(n_h+k, n_l) - U_a^s(n_h+k-1, n_l)}{U_a^s(n_h+1, n_l) - U_a^s(n_h, n_l)}\notag\\
        &= \frac{p_h^{n_h+k-1} \left(p_hu_h-\frac{1-p_h}{1-p_l}p_l(1-p_l^{n_l})u_l\right)}{p_h^{n_h} \left(p_hu_h-\frac{1-p_h}{1-p_l}p_l(1-p_l^{n_l})u_l\right)}\tag{by \cref{eq:margianl_h}}\\
        &= p_h^{k-1}.
    \end{align}
    \end{proof}

\subsection{Generalizations to the Finite Effort Setting}
\label{app:finite_effort}

We first generalize the key property of marginal rate of substitution to the finite effort setting

\begin{proof}[Proof of \cref{lem:mrs-finite}]
    The proof of this proposition is analogue to the proof of \cref{lem:mrs-binary}. For simplicity, let 
    $$\Gamma_i = \prod_{k\in [i]} p_k^{n_k}, \qquad S_i = \sum_{k\in [n_i]} p_i^k = \frac{p_i(1-p_i^{n_i})}{1-p_i}.$$

    Again, because each paper is independently reviewed under the parallel review mechanism, it is easy to show 
    \begin{equation*}
        MRS^p_{i,j}(\bm{n}) = \frac{p_i u_i}{p_j u_j}.
    \end{equation*}

    The MRS of the sequential review mechanism is more complicated. We first write down the author's expected reward of writing $\bm{n}$ papers and rank them truthfully.
    \begin{align*}
        U_a^s(\bm{n}) &= \sum_{l\in [|\bm{n}|]} \Pr(\text{paper $1,\ldots, l$ are accepted})\cdot \text{$<$the reward of paper $l>$}\\
        &= S_1u_1 +\Gamma_1 S_2u_2 +\cdots + \Gamma_{m-1} S_m u_m
    \end{align*}

    For simplicity, let $Z_i = \sum_{l=i}^m \Gamma_{l-1} S_l u_l$ for any $2\le i \le m$ and let $Z_1 = S_1 u_1 +  \sum_{l=2}^m \Gamma_{l-1} S_l u_l$. Thus, $U_a^s(\bm{n}) = Z_1$. 
    Now, suppose the author writes one more paper with acceptance probability $p_i$. Recall that we use $\bm{n}' = (n_1,\ldots, n_{i-1}, n_i+1, \ldots, n_m)$ to denote the new vector of number of papers.
    \begin{align*}
        U_a^s(\bm{n}') &= S_1u_1 +\Gamma_1 S_2u_2 +\cdots + \Gamma_{i-1} S_i u_i + \Gamma_{i} p_i u_i + p_i\left(\Gamma_{i} S_{i+1} u_{i+1} + \cdots +  \Gamma_{m-1} S_m u_m\right)\\
        &= Z_1 - Z_{i+1} + \Gamma_{i} p_i u_i + p_i Z_{i+1}\\
        &=  Z_1 - (1-p_i) Z_{i+1} + \Gamma_{i} p_i u_i.
    \end{align*}
    Therefore, 
    \begin{align}\label{eq:marginal_i}
        U_a^s(\bm{n}') - U_a^s(\bm{n}) = \Gamma_{i} p_i u_i - (1-p_i) Z_{i+1}.
    \end{align}

    Similarly, if the author writes one more paper with acceptance probability $p_j$ where $j>i$, we have
    \begin{align*}
        U_a^s(\bm{n}'') - U_a^s(\bm{n}) &= \Gamma_{j} p_j u_j - (1-p_j) Z_{j+1}.
    \end{align*}

    Therefore, 
    \begin{align*}
        MRS^s_{i,j}(\bm{n}) &= \frac{U_a^s(\bm{n}') - U_a^s(\bm{n})}{U_a^s(\bm{n}'') - U_a^s(\bm{n})}\\
        &= \frac{\Gamma_{i} p_i u_i - (1-p_i) Z_{i+1}}{\Gamma_{j} p_j u_j - (1-p_j) Z_{j+1}}\\
        &= \frac{p_iu_i}{p_ju_j}\cdot \frac{\Gamma_{i} - \frac{1-p_i}{p_iu_i} Z_{i+1}}{\Gamma_{j} - \frac{1-p_j}{p_ju_j} Z_{j+1}}
        \coloneqq MRS^p_{i,j}(\bm{n}) \cdot \eta(\bm{n}).
    \end{align*}
    We want to show $\eta(\bm{n}) - 1 \ge 0$. Because $U_a^s(\bm{n}'') - U_a^s(\bm{n})>0$, it is sufficient to show
    \begin{align*}
        h(\bm{n}) = \Gamma_i -\Gamma_j + \frac{1-p_j}{p_ju_j} Z_{j+1} - \frac{1-p_i}{p_iu_i} Z_{i+1} \ge 0.
    \end{align*}
    This inequality holds because 
    \begin{align*}
        h(\bm{n}) &= \Gamma_i -\Gamma_j + \frac{1-p_j}{p_ju_j} Z_{j+1} - \frac{1-p_i}{p_iu_i} (Z_{i+1} - Z_{j+1}) - \frac{1-p_i}{p_iu_i} Z_{j+1} \\
        \intertext{Because $p_i>p_j$ and $u_i\ge u_j$, $\frac{1-p_i}{p_iu_i}<\frac{1-p_j}{p_ju_j}$.}
        &\ge \Gamma_i -\Gamma_j - \frac{1-p_i}{p_iu_i} (Z_{i+1} - Z_{j+1}). \tag{The inequality is strict if $Z_{j+1}>0$, i.e.~$\sum_{k = j+1}^m n_k > 0$.}\\
    \end{align*}
    Note that 
    \begin{align*}
        \frac{1-p_i}{p_iu_i} Z_k &= \frac{1-p_i}{p_iu_i} \cdot \sum_{l=k}^m \Gamma_{l-1} S_l u_l = \sum_{l=k}^m \frac{1-p_i}{p_i} \cdot \frac{p_l(1-p_l^{n_l})}{1-p_l} \cdot \frac{u_l}{u_i}\cdot \Gamma_{l-1}. 
    \intertext{Therefore, for any $k>i$, we have }
    \frac{1-p_i}{p_iu_i} Z_k&\le \sum_{l=k}^m (1-p_l^{n_l}) \Gamma_{l-1} = \sum_{l=k}^m (\Gamma_{l-1} - \Gamma_l) = \Gamma_{k-1} - \Gamma_m.
    \end{align*}
    Furthermore, the above inequality is strict if $\sum_{l = k}^m n_k > 0$.
    Now, we complete the proof by showing $h(\bm{n})$ is non-negative.
    \begin{align*}
        h(\bm{n}) &\ge \Gamma_i -\Gamma_j - \frac{1-p_i}{p_iu_i} (Z_{i+1} - Z_{j+1})\\
        & \ge \Gamma_i -\Gamma_j - ( \Gamma_i - \Gamma_m -  \Gamma_j + \Gamma_m)\\
        & = 0.
    \end{align*}
    If $\sum_{k = i+1}^m n_k > 0$, at least one of the above two inequality is strict, and thus $MSR^s_{i,j}(\bm{n}) > MSR^p_{i,j}(\bm{n})$ for any $\bm{n}$ and $i<j$.
\end{proof}

Then, we generalize \cref{lem:lambda} to the finite effort setting respectively. If \cref{lem:lambda} can be generalized, the proof of \cref{thm:qual_vs_quant} straightforwardly generalizes because we can treat every high-quality paper in the binary setting as a paper with acceptance probability $p_i$, and every low-quality paper as a paper with acceptance probability $p_j$ for any $i<j$. Let $\lambda_i(k)=\frac{U_a^s(\bm{n}_i^{(k)}) - U_a^s(\bm{n}_i^{(k-1)})}{U_a^s(\bm{n}_i^{(1)}) - U_a^s(\bm{n}_i^{(0)})}$, where $\bm{n}_i^{(k)} = (n_1, \ldots, n_{i-1}, n_i + k, \dots, n_m)$.

    \begin{lemma}\label{lem:lambda_finite}
        $\lambda_i(k)=p_i^{k-1}$.
    \end{lemma}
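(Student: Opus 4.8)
The plan is to reuse the marginal-return formula \eqref{eq:marginal_i} from the proof of \cref{lem:mrs-finite}, which states that incrementing the count of papers with acceptance probability $p_i$ changes the author's reward by $\Gamma_i p_i u_i - (1-p_i) Z_{i+1}$. The key observation is that this same formula applies when the ``base'' vector is $\bm{n}_i^{(k-1)}$ rather than $\bm{n}$, since $\bm{n}_i^{(k)}$ is obtained from $\bm{n}_i^{(k-1)}$ by incrementing exactly the $i$-th coordinate. So first I would write
$$U_a^s(\bm{n}_i^{(k)}) - U_a^s(\bm{n}_i^{(k-1)}) = \Gamma_i^{(k-1)} p_i u_i - (1-p_i) Z_{i+1}^{(k-1)},$$
where $\Gamma_i^{(k-1)}$ and $Z_{i+1}^{(k-1)}$ denote the quantities $\Gamma_i$ and $Z_{i+1}$ evaluated at the vector $\bm{n}_i^{(k-1)}$.

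Next I would track how these two quantities scale as a function of $k$. Since $\bm{n}_i^{(k-1)}$ agrees with $\bm{n}$ except that its $i$-th coordinate equals $n_i + k - 1$, and $\Gamma_i = \prod_{l\in[i]} p_l^{n_l}$ includes the factor $p_i^{n_i}$, we get $\Gamma_i^{(k-1)} = p_i^{k-1}\,\Gamma_i$. Similarly, for every $l \ge i+1$ the truncated sum $S_l$ depends only on $n_l$ (unchanged) and the prefix product $\Gamma_{l-1}$ contains the factor $p_i^{n_i}$, so $\Gamma_{l-1}^{(k-1)} = p_i^{k-1}\,\Gamma_{l-1}$; summing gives $Z_{i+1}^{(k-1)} = p_i^{k-1}\,Z_{i+1}$. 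Substituting,
$$U_a^s(\bm{n}_i^{(k)}) - U_a^s(\bm{n}_i^{(k-1)}) = p_i^{k-1}\bigl(\Gamma_i p_i u_i - (1-p_i) Z_{i+1}\bigr).$$

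Finally, since $\bm{n}_i^{(0)} = \bm{n}$, the denominator in the definition of $\lambda_i(k)$ is exactly $U_a^s(\bm{n}_i^{(1)}) - U_a^s(\bm{n}_i^{(0)}) = \Gamma_i p_i u_i - (1-p_i) Z_{i+1}$, so dividing yields $\lambda_i(k) = p_i^{k-1}$, as claimed. I do not anticipate a real obstacle here; the only care needed is in bookkeeping the exponent of $p_i$ inside $\Gamma$ and $Z$ when the $i$-th coordinate is shifted, and in noting that the bracketed factor $\Gamma_i p_i u_i - (1-p_i)Z_{i+1}$ is nonzero (it equals the strictly positive marginal reward $U_a^s(\bm{n}_i^{(1)}) - U_a^s(\bm{n}_i^{(0)})$), so the ratio is well defined. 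As remarked in the text, with \cref{lem:lambda_finite} in hand the telescoping argument from the proof of \cref{thm:qual_vs_quant} carries over verbatim, with ``high-quality'' papers replaced by papers of acceptance probability $p_i$ and ``low-quality'' papers by those of probability $p_j$, giving \cref{thm:qual_vs_quant_finite}.
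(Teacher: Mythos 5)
Your proposal is correct and follows essentially the same route as the paper: both derive $\lambda_i(k)$ by iterating the marginal-return formula \cref{eq:marginal_i}. In fact, your version is slightly more careful than the paper's own proof, which only asserts ``intuitively'' that each successive increment is exponentially discounted by $p_i$, whereas you verify this explicitly by showing that $\Gamma_i$ and $Z_{i+1}$ each pick up a factor of $p_i^{k-1}$ when evaluated at $\bm{n}_i^{(k-1)}$.
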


    \begin{proof}
        For simplicity, let 
    $$\Gamma_i = \prod_{k\in [i]} p_k^{n_k}, \qquad S_i = \sum_{k\in [n_i]} p_i^k = \frac{p_i(1-p_i^{n_i})}{1-p_i}, \qquad Z_i = \sum_{l=i}^m \Gamma_{l-1} S_l u_l.$$

    By \cref{eq:marginal_i}, we know that if there is one more paper with acceptance probability $p_i$, the gain of expected reward is $U_a^s(\bm{n}_i^{(1)}) - U_a^s(\bm{n}_i^{(0)}) = \Gamma_{i} p_i u_i - (1-p_i) Z_{i+1}$. Intuitively, if there are $k$ more papers of acceptance probability $p_i$, while reasoning about the expected reward, it is equivalent to say that there is a new category of papers that have the same acceptance probability $p_i$ and are ranked lower than the $n_{i-1}$ papers with $p_{i-1}$ but higher than the $n_i$ papers with $p_i$. Therefore, every term of the gain of expected reward is exponentially discounted, i.e.~$U_a^s(\bm{n}_i^{(k)}) - U_a^s(\bm{n}_i^{(k-1)}) = p_i^{k-1}\cdot \left(U_a^s(\bm{n}_i^{(1)}) - U_a^s(\bm{n}_i^{(0)})\right)$. This implies that $\lambda_i(k) = p_i^{k-1}$.
    \end{proof}

\end{document}